\documentclass[10pt]{article}
\setlength{\parindent}{0pt}
\usepackage{amsmath,amssymb,amsthm}
\usepackage{geometry}
\usepackage{fancyhdr}
\usepackage{setspace}
\usepackage{hyperref}
\usepackage{authblk}
\usepackage{graphicx}
\usepackage{enumitem} 
\usepackage{titlesec}
\usepackage{extpfeil}

\usepackage{xcolor}

 \bibliographystyle{spmpsci} 

\geometry{
    a4paper,          
    top=1in,          
    bottom=1in,       
    left=1in,         
    right=1in         
}

\onehalfspacing

\titleformat{\section}{\large\bfseries}{\thesection.}{1em}{}
\titleformat{\subsection}{\large\itshape}{\thesubsection.}{1em}{}
\titleformat{\subsubsection}{\large\itshape}{\thesubsubsection.}{1em}{}
\pagestyle{fancy}
\fancyhf{}
\fancyhead[L]{Robust No-Arbitrage under Projective Determinacy}
\fancyhead[R]{\thepage}

\newtheorem{theorem}{Theorem} 
\newtheorem{lemma}{Lemma}
\newtheorem{proposition}{Proposition}
\newtheorem{corollary}{Corollary}
\newtheorem{definition}{Definition}

\newtheorem{assumption}{Assumption}
\newtheorem{remark}{Remark}
\newtheorem{axiom}{Axiom}

\newcommand{\keywords}[1]{%
  \vspace{1em}\par\noindent\textbf{Keywords:} #1\par\vspace{1em}%
}

\def \o{\omega}
\def \O{\Omega}
\title{\huge Robust No-Arbitrage under Projective Determinacy}
\author[2]{Alexandre Boistard}
\author[1]{Laurence Carassus}
\author[2]{Safae Issaoui}
\affil[1]{MICS, Centrale-Sup\'{e}lec, Universit\'{e} Paris-Saclay, France}
\affil[2]{Dominante MDS, Centrale-Sup\'{e}lec, Universit\'{e} Paris-Saclay, France}
\date{\vspace{0em}\today}

\begin{document}
\maketitle

\begin{abstract}
Drawing on set theory, this paper contributes to a deeper understanding of the structural condition of mathematical finance under Knightian uncertainty.

We adopt a projective framework in which all components of the model -- prices, priors and trading strategies -- are treated uniformly in terms of measurability. This contrasts with the quasi-sure setting of Bouchard and Nutz, in which prices are Borel-measurable and graphs of local priors are analytic sets, while strategies and stochastic kernels inherit only universal measurability.

In our projective framework, we establish several characterizations of the robust no-arbitrage condition, already known in the quasi-sure setting, but under significantly more elegant and consistent assumptions. 
These characterisations have important applications, in particular, the existence of solutions to the robust utility maximization problem.

To do this, we work within the classical Zermelo-Fraenkel set theory with the Axiom of Choice (ZFC), augmented by the axiom of Projective Determinacy (PD). The (PD) axiom, a well-established axiom of descriptive set theory, guarantees strong regularity properties for projective sets and projective functions.

\end{abstract}

\keywords{Robust Finance, Quasi-sure No-Arbitrage, Projective Determinacy, Projective set, Projective function.}

\section{Introduction}
The no-arbitrage hypothesis is a cornerstone in financial mathematics and economic theory, ensuring the internal consistency of pricing models, optimal solutions in portfolio selection models and preventing arbitrage opportunities that could destabilise markets. The no-arbitrage principle asserts that making a non-risky profit with zero net investment is impossible. 
Traditional approaches assume a single probability measure to describe the evolution of asset prices; however, in a multiple-priors (or robust or Knightian) framework, uncertainty is modelled through a family of probability measures or a set of events. This generalization accounts for ambiguity and model uncertainty, making it particularly relevant in modern financial markets where agents may hold diverse and even conflicting beliefs about future states of the world. 
The earliest literature assumed that the set of beliefs is dominated. We refer to \cite{FSW09} for a comprehensive survey of the dominated case. 
Unfortunately, this setting excludes volatility uncertainty and is easily violated in discrete time (see \cite{blanchard2020}); this is why we focus on the non-dominated case.

Different notions of arbitrage have been developed in discrete-time robust finance. 
The quasi-sure no-arbitrage condition of Bouchard and Nutz (\cite{bouchard2015}) $NA(\mathcal{Q})$ states that if the terminal value of a trading strategy, starting from 0, is non-negative $\mathcal{Q}$-quasi-surely, then it always equals $0$ $\mathcal{Q}$-quasi-surely, where $\mathcal{Q}$ represents all the possible probability measures or beliefs. 
Here $\mathcal{Q}$-quasi-surely roughly means $P$-a.s. for all $P\in \mathcal{Q}$. 
The pathwise approach takes a scenario-based interpretation of arbitrage rather than relying on a set of probability measures: a subset of relevant events or scenarios without specifying their relative weight is given (see, for example, \cite{Burz18}). Notably, \cite{OW18} have unified the quasi-sure and the pathwise approaches, demonstrating, under certain regularity assumptions, that both approaches are equivalent. We also mention the model-independent approach, discussed, for example, in \cite{Rie15}. 

Here, we focus on the quasi-sure no-arbitrage condition of Bouchard and Nutz, which has become dominant in the discrete-time literature. 
However, under this condition, it is not even clear if there exists a belief $P \in \mathcal{Q}$ satisfying the single-prior no-arbitrage condition $NA(P)$. It is indeed true, but $\mathcal{Q}$ might still contain some models that are not arbitrage-free (see \cite{blanchard2020}). In \cite{blanchard2020}, the authors have shown that the $NA(\mathcal{Q})$ condition is equivalent to the existence of a subclass of priors $\mathcal{P}\subseteq \mathcal{Q}$ such that $\mathcal{P}$ and $ \mathcal{Q}$ have the same polar sets (roughly speaking, the same relevant events) and $NA(P)$ holds for all $P \in \mathcal{P}$. So instead of $NA(\mathcal{Q})$, one may assume that every model in $\mathcal{P}$ is arbitrage-free. Under quasi-sure uncertainty, these perspectives provide a more flexible framework for pricing and hedging.  It also allows tractable theorems for the existence of solutions to the problem of robust utility maximisation (see  \cite{blanchard2020},  \cite{BCK19} or \cite{RaMe18}). The construction of $\mathcal{P}$ is based on the 
existence of a probability measure for which the single-prior no-arbitrage condition holds $\mathcal{Q}$-quasi-surely and the affine hull of the price increments support is equal to the quasi-sure one, again $\mathcal{Q}$-quasi-surely.

In the framework of Bouchard and Nutz, random sets of local priors are first given. These probability measures are ``local" in that they represent the investor's belief between two successive moments. The cornerstone assumption of \cite{bouchard2015} is that the graphs of these random sets are analytic sets. 
Thanks to this assumption and to the measurable selection theorem of Jankov-von Neumann (see \cite{Bertsekas1978}), it is possible to obtain local beliefs that are analytically and, thus, universally measurable, as a function of the path.  The intertemporal set of beliefs can then be constructed from these kernels as product measures. Measurable selection is also necessary to do the way back, for example, to go from intertemporal quasi-sure inequalities to local quasi-sure ones, as when going from intertemporal no-arbitrage to local ones. For that, Bouchard and Nutz rely on 
 the uniformization of nuclei of Suslin schemes on the product of the universal sigma-algebra and the Borel one, as discussed by Leese in \cite{Leese78}.  So, one needs to go outside the class of analytic sets (which are the nuclei of Suslin schemes on the Borel sigma-algebra).  
Moreover, Bouchard and Nutz use upper semianalytic functions. A technical issue is that the composition of two upper semianalytic functions may not remain upper semianalytic. This is why the prices are assumed to be Borel measurable. Furthermore, the class of analytics sets is not closed under complement. For example, the set where the local quasi-sure no-arbitrage holds is co-analytic, and if we restrict upper semianalytic functions to this set, they are no longer upper semianalytic. 
Summing up in the classical framework of Bouchard and Nutz, the price processes are assumed to be Borel measurable, the graphs of random beliefs to be analytic sets, while trading strategies are only obtained to be universally measurable. 
 The conditions of measurability are not homogeneous, and you have to assume a lot (Borel, analytical sets) to obtain little (universally measurable).

To address this issue, an interesting development is the connection between robust finance and advanced set-theoretic axioms. 
Projective Determinacy, an axiom from descriptive set theory, has emerged as a powerful tool when dealing with Knightian uncertainty (see \cite{burzoni2019} and \cite{CF24}). 
Other examples in mathematical finance and economics where some set-theoretic axioms are used can be found in \cite{refmai_pj} and \cite{BCK19}. 
Using projective sets instead of analytic sets or nuclei of Suslin schemes has been particularly fruitful in handling non-dominated model uncertainty, especially in non-concave utility maximisation. Assuming the axiom of Projective Determinacy, projective sets share the same regularity properties as analytic sets.  They are also stable by complement, and the composition of projective functions remains projective. 
Projective Determinacy is rooted in early 20th-century mathematical logic, with contributions from various mathematicians and logicians. 
In the 1980s, significant advances were made by Martin and Moschovakis (see \cite{refboreldet_pj} and the textbook of Moschovakis \cite{Moschovakis}) and then by Woodin with the connection to the existence of large cardinals (see \cite{refwood1_pj} for a survey).  
Determinacy refers to the existence of a winning strategy for one of the two players of an infinite game, and the axiom of Projective Determinacy states that every projective set is determined. 



We use the same projective framework as introduced by Carassus and Ferhoune for solving robust non-concave utility maximization (see \cite{CF24}). The price processes are assumed to be projectively measurable, the graphs of random beliefs to be projective sets, and we obtain projectively measurable stochastic kernels and trading strategies. 
This allows for simplifying and unifying the technical assumptions (and the proofs) that were needed in previous works, such as analytic and Borel measurability. 
We characterise the quasi-sure no-arbitrage condition in the projective setup. 
We show that the $NA(\mathcal{Q})$ condition is equivalent to the existence of $P^*\in \mathcal{Q}$ for which the geometric form of $NA(P^*)$ holds $\mathcal{Q}$-quasi-surely and such that the affine hull of the price increments support under $P^*$ is equal to the quasi-sure one, again $\mathcal{Q}$-quasi-surely. We also show that $NA(\mathcal{Q})$ is equivalent to the existence of $\mathcal{P}\subseteq \mathcal{Q}$ such that $\mathcal{P}$ and $ \mathcal{Q}$ have the same polar sets and $NA(P)$ holds for all $P \in \mathcal{P}$. Note that in \cite{CF24}, $P^*$ was a key tool to prove the existence of projectively measurable solutions to the robust non-concave utility maximisation problem. But the equivalence between the existence of $P^*$ and the $NA(\mathcal{Q})$ condition was only conjectured. We therefore provide a positive answer to the conjecture made by Carassus and Ferhoune. One further advantage of the projective framework is that the single-prior models naturally appear as a special case. Indeed, in the model of Bouchard and Nutz, one must assume that the graph of this prior is analytic, whereas in the projective setting, this graph is automatically projective. As a consequence, the no-arbitrage characterization in markets with a single prior emerges as a direct corollary. 

Our proofs rely on properties of projective functions and sets proved in \cite{carassus2024}. We also establish a key result that allows us to go from $\mathcal{Q}$-quasi-sure positivity to local positivity on a projective set of full measure. This result is inspired by Lemma A.1 of \cite{Carassus25}, which was formulated in the context of nuclei of Suslin schemes on the product of the universal sigma-algebra and the Borel one. 
 This allows us to prove that the quasi-sure no-arbitrage condition is consistent with its local version at each time step, and to work initially in a one-period model.
 In addition, several proofs are adaptations of arguments from \cite{blanchard2020} to the projective framework. We also provide a complete one-period proof of the characterization of $NA(\mathcal{Q})$.


To summerize, the interplay between Projective Determinacy and measurable selection provides a powerful foundation for understanding dynamic decision-making in ambiguous environments, reinforcing the theoretical underpinnings of multiple-priors financial models.



The paper is structured as follows: Section \ref{projsetup} explains the projective setup, while Section \ref{finsetup} presents the financial setting.    Section \ref{mainresult} contains our main results, while Section \ref{proofs} presents their proofs. Finally, the Appendix contains results about the one-period model and properties of projective sets and functions.\\
We finish this introduction with some notations and definitions.  
For all Polish spaces $X$, we denote by $\mathfrak{P}(X)$ the set of probability measures defined on the measurable space $(X, \mathcal{B}(X))$, where $\mathcal{B}(X)$ is the Borel sigma-algebra on $X$. We define the universal sigma-algebra on $X$ as
\[
\mathcal{B}_c(X) := \bigcap_{P \in \mathfrak{P}(X)} \mathcal{B}_P(X),
\]
where $\mathcal{B}_P(X)$ denotes the completion of $\mathcal{B}(X)$ with respect to $P \in \mathfrak{P}(X)$. For the rest of this paper, we use the same notation for $P \in \mathfrak{P}(X)$ and its (unique) extension on $\mathcal{B}_c(X)$.

In this context, a set $A$ is called $\mathcal{Q}$-polar for some $\mathcal{Q} \subseteq \mathfrak{P}(X)$ if there exists $N \in \mathcal{B}_c(X)$ such that $A\subseteq N$ and $P[N] = 0$ for all $P \in \mathcal{Q}.$ Moreover, a set $B$ is of $\mathcal{Q}$-full-measure if $X\setminus B$ is $\mathcal{Q}$-polar. Finally, a property holds true $\mathcal{Q}$-quasi-surely (q.s.) if it holds true on a  $\mathcal{Q}$-full-measure set. 

Let $x \in \mathbb{R}^d$ and $\epsilon >0,$ then $|x|$ is the Euclidian norm of  $x$ and  the open ball of center $x$ and radius $\epsilon$ is denoted by $B(x,\epsilon):=\{y\in\mathbb{R}^d:|y-x|<\epsilon\}$. 

\section{Projective setup}
\label{projsetup}
We introduce our projective setup.

\begin{definition}[Projective sets] \label{def_projective_sets}
Let $X$ be a Polish space. 
An analytic set of $X$ is the projection into $X$ of a Borel subset of $X \times \mathbb{N}^\mathbb{N}$. The class of such sets is denoted by $\Sigma^1_1(X)$. The complement of an analytic set is called a co-analytic set, which class is denoted by $\Pi^1_1(X)$. 

For $n \geq 2$, the classes of analytic and co-analytic sets of order $n$ are defined recursively:
\[
\Sigma^1_{n}(X) := \bigl\{ \mathrm{proj}_X(C) : C \in \Pi^1_{n-1}(X \times \mathbb{N}^\mathbb{N})\bigr \}, \quad \Pi^1_{n}(X) :=\bigl \{ X \setminus C : C \in \Sigma^1_{n}(X) \bigr\}.
\]
For all $n\geq 1$, the intersection of these two classes defines $\Delta^1_n(X)$:
   \[
   \Delta^1_n(X) := \Sigma^1_n(X) \cap \Pi^1_n(X).
   \]
   
Finally, the class of projective sets on $X$ is defined as
\[
\mathbf{P}(X) := \bigcup_{n \geq 1} \Delta^1_n(X).
\]

\end{definition}
Note that the only sets which are analytic and co-analytic are the Borel sets, see \cite[Theorem 14.11, p88]{kechris1995}: 
\begin{equation}\label{equation_borel_set}
    \mathcal{B}(X)=\Sigma^1_{1}(X)\cap\Pi^1_{1}(X)=\Delta^1_1(X). 
\end{equation}

Let $X$ be a Polish space. We now define the notion of measurability used in this paper.
\begin{definition}[Projective functions] \label{def_projective_mes_functions}
A function $f: X \to \mathbb{R}^d $ is $\Delta_n^1(X) $-measurable if $f^{-1}(B) $ belongs to $\Delta_n^1(X) $ for all Borel sets $B \subseteq \mathbb{R}^d $. The function $f $ is projective (or projectively measurable or $\mathbf{P}(X) $-measurable), if there exists $n \in \mathbb{N}^*$ such that $f $ is $\Delta_n^1(X) $-measurable.
\end{definition}
For two  Polish spaces $X$ and $Y$, we will denote set-valued mappings\footnote{A set-valued mapping $F: X \twoheadrightarrow Y$ is a mapping such that for every $x \in X$, $F(x)$ is a subset of $Y$.} as $F:X \twoheadrightarrow Y$. 
\begin{definition}[Projective mappings] \label{def_projective_mes_mapping}
A set-valued mapping $F: X \twoheadrightarrow \mathbb{R}^d $ is $\Delta_n^1(X) $-measurable if $F^{-1}(O) := \{ x \in X : F(x) \cap O \neq \emptyset \} $ belongs to $\Delta_n^1(X) $ for all open sets $O \subseteq \mathbb{R}^d $ (see also \cite[Definition 14.1, p.643]{RockafellarWets1998}).
The mapping $F $ is projective (or projectively measurable or $\mathbf{P}(X) $-measurable), if there exists $n \in \mathbb{N}^* $ such that $F $ is $\Delta_n^1(X) $-measurable.
\end{definition}

\begin{remark}\label{n_for_projection}
     The $n\in \mathbb{N}^*$ defined in these two definitions is independent of the Borel or open sets and only depends on the function or set-valued mapping.
\end{remark}

We need the notion of determined sets to state the (PD) axiom. Fix a set $A \subseteq \mathbb{N}^\mathbb{N}$. Consider a two-player infinite game. Player I plays $a_0 \in \mathbb{N} $, then Player II plays $b_0 \in \mathbb{N} $, then Player I plays $a_1 \in \mathbb{N} $, etc. A play is a sequence $(a_0, b_0, a_1, b_1, \dots) \in \mathbb{N}^{\mathbb{N}} $. Player I wins the game if $(a_0, b_0, a_1, b_1, \dots) \in A $. Otherwise, if $(a_0, b_0, a_1, b_1, \dots) \in \mathbb{N}^{\mathbb{N}} \setminus A $, Player II wins. A winning strategy for a Player is a strategy under which the Player always wins; that is, the result of the game always belongs to the set $A$ for Player I or to $\mathbb{N}^\mathbb{N}\setminus A$ for Player II, regardless of what the other Player plays.
\begin{definition}[Determined sets]
\label{def_determined_sets}
    A set $A \subseteq \mathbb{N}^\mathbb{N}$ is determined if a winning strategy exists for one of the two players.
\end{definition}
 Borel sets (see \cite{refboreldet_pj}) are determined. However, the determinacy of Borel sets is the best possible result provable in ZFC. This is why Martin and Moschovakis independently introduce the axiom of Projective Determinacy (see, for example, \cite[Definition 38.15, p. 325]{kechris1995} in the textbook of Kechris).
\begin{axiom}[Projective Determinacy]\label{axiom_PD}
The Projective Determinacy (PD) axiom states that if $A \subseteq \mathbb{N}^\mathbb{N}$ is a projective set, then $A$ is determined.
\end{axiom}
\begin{remark}
The (PD) axiom is a fruitful axiom as it allows to answer old questions on projective sets coming from Lusin. What makes this axiom plausible is that it is implied by other axioms of descriptive set theory that, a priori, have no direct connection with projective sets. In particular, the (PD) axiom is closely connected to the existence of Woodin cardinals. Martin and Steel (see \cite{reflargedet_pj}) showed that if sufficiently many Woodin cardinals exist, then all projective sets are determined. Conversely, assuming the (PD) axiom, one can construct inner models (transitive set-theoretic classes that satisfy the axioms of ZF and contain all the ordinals) which exhibit sequences of Woodin cardinals. 
The relative consistency of large cardinals with ZFC has been extensively studied via inner model theory and forcing techniques, and no contradictions are known assuming that ZFC itself is consistent (see \cite{refwood1_pj}). 
Quoting Woodin \lq\lq{}Projective Determinacy is the correct axiom for the projective sets; the ZFC axioms are obviously incomplete and, moreover, incomplete in a fundamental way.\rq\rq{}
\end{remark}
We will not apply the (PD) axiom directly, but rather one of the reasons why it was introduced: it implies that projective sets are universally measurable and that a projectively measurable selection is possible on projective sets. 
\begin{proposition}[Consequences of the (PD) axiom]\label{prop_conseq_PD_axiom}
Assume the (PD) axiom.
    \begin{enumerate}
    \item[(i)] If $X$ is a Polish space, then $\mathbf{P}(X)\subseteq\mathcal{B}_c(X)$. 
    \item[(ii)]  Let $X $ and $Y $ be Polish spaces and $A \in \mathbf{P}(X \times Y) $. Then, there exists a projective function $\phi : \operatorname{proj}_X(A) \to Y $ such that $\operatorname{Graph}(\phi) \subseteq A $.
    \end{enumerate}

\end{proposition}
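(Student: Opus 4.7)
The plan is to deduce both items from classical consequences of projective determinacy in descriptive set theory, which are among the historical motivations for introducing (PD); detailed arguments can be found in the standard references \cite{Moschovakis} and \cite{kechris1995}.

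For (i), I would fix $A \in \mathbf{P}(X)$ and a probability $P \in \mathfrak{P}(X)$, and aim to produce Borel sets $B_1 \subseteq A \subseteq B_2$ with $P(B_2 \setminus B_1) = 0$. The route is the Mycielski-Swierczkowski unfolded-game technique: to the pair $(A,P)$ one associates a two-player game on $\mathbb{N}$ whose payoff set is projective, arranged so that a winning strategy for either player yields either an inner or an outer Borel approximation of $A$ with tight $P$-measure. Because the payoff set is projective, (PD) guarantees that this game is determined, so the approximations exist and $A \in \mathcal{B}_P(X)$. Since $P \in \mathfrak{P}(X)$ was arbitrary, $A \in \mathcal{B}_c(X)$.

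For (ii), I would invoke the Moschovakis periodicity/uniformization theorem: under (PD), every projective pointclass has the uniformization property, so any set in $\Sigma^1_n(X \times Y)$ admits a uniformizing function whose graph again lies in some projective class. Given $A \in \mathbf{P}(X \times Y)$, choose $n$ with $A \in \Delta^1_n(X \times Y)$ and apply this theorem to obtain $\phi : \operatorname{proj}_X(A) \to Y$ with $\operatorname{Graph}(\phi) \subseteq A$ and $\operatorname{Graph}(\phi) \in \mathbf{P}(X \times Y)$. To verify that $\phi$ is projective in the sense of Definition~\ref{def_projective_mes_functions}, observe that for any Borel $B \subseteq Y$,
\[
\phi^{-1}(B) = \operatorname{proj}_X\bigl(\operatorname{Graph}(\phi) \cap (X \times B)\bigr),
\]
which is the projection of a projective set and hence projective; the preimage of the complement is handled similarly via $\operatorname{proj}_X(A) \setminus \phi^{-1}(B)$.

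The main obstacle is that the two underlying ingredients -- unfolded determinacy games for universal measurability and the periodicity theorem for uniformization -- are deep theorems of descriptive set theory that I would cite rather than reprove. Beyond invoking them, the only bookkeeping required is tracking the projective complexity of the selector in (ii), which is absorbed by the fact that the level $n$ in Definition~\ref{def_projective_mes_functions} is allowed to depend on the function, so no uniform bound on the projective class is needed.
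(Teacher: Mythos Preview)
Your proposal is correct and follows the same route as the paper: both items are standard consequences of (PD) from descriptive set theory, and the paper simply cites \cite[Theorem 38.17, p.~326]{kechris1995} for (i) and \cite[Proposition 6]{carassus2024} for (ii) without further argument. Your sketch of the unfolded-game argument for universal measurability and of Moschovakis uniformization for the selector is exactly what lies behind those references; the only minor imprecision is the phrase ``every projective pointclass has the uniformization property'' (under (PD) it is the classes $\Pi^1_{2n+1}$ and $\Sigma^1_{2n+2}$ that carry the scale/uniformization property), but since the hierarchy is increasing this still yields a projective uniformizer for any $A\in\mathbf{P}(X\times Y)$, and your bookkeeping for the $\Delta^1_n$-measurability of $\phi$ is sound.
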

\begin{proof}
    See \cite[Theorem 38.17, p.~326]{kechris1995} and \cite[Proposition 6]{carassus2024}.
\end{proof}

\section{Financial setting}
\label{finsetup}

We fix a time horizon $T $ and introduce a family of Polish spaces $(\Omega_t)_{t \in \{1, \dots, T\}} $. 
For all $t \in \{0, \dots, T\} $, let $\Omega^t := \Omega_1 \times \dots \times \Omega_t$ with the convention that $\Omega^0=\{\omega_0\}$ is a singleton. For all $t \in \{0, \dots, T\},$  let $S_t: \O^t \to \mathbb{R}^d$. Then, $S := (S_t)_{t \in \{0, \dots, T\}} $ is the $\mathbb{R}^d$-valued process representing the price of the $d $  risky assets over time. A riskless asset whose price equals 1 is also available. We are now in a position to state our first assumption.

\begin{assumption}[Measurability of the prices]\label{assumption_prices}
For all $t \in \{0,\dots,T\} $, $S_t $ is $\mathbf{P}(\Omega^t)$-measurable. In the case $t = 0$, we mean that $S_0$ is a constant.
\end{assumption}
\begin{definition}[Trading strategies]\label{def_strategy} 
Let $\phi_t: \O^{t-1} \to \mathbb{R}^d$ for all $t \in \{1, \dots, T\}.$  
A trading strategy $\phi$ is a $d$-dimensional process $\phi := \{\phi_t : t \in \{1, \dots, T\}\}$ such that $\phi_t$ is $\mathbf{P}(\Omega^{t-1})$-measurable for all $t \in \{1, \dots, T\}$. In the case $t = 1$, we mean that  $\phi_1$ is a constant. We denote by $\Phi$ the set of such strategies, which are also self-financing.
\end{definition}

For $x,y\in \mathbb{R}^d,$ the scalar product of $x$ and $y$ will be concatenated as $xy$. For $\phi \in \Phi$, $V_t^{x, \phi}$ denotes the value of the strategy $\phi$ at time $t \in \{0, \dots , T\}$ with initial endowment  $x \in \mathbb{R}$. We get that
\[
V_t^{x, \phi} = x + \sum_{s=1}^{t} \phi_s \Delta S_s.
\]

We now construct the set $\mathcal{Q}^T$ of all prevailing priors. The set $\mathcal{Q}^T$ captures all the investor’s beliefs about the law of nature. It is construct out of the one-step priors $\mathcal{Q}_{t+1} : \Omega^t \twoheadrightarrow \mathfrak{P}(\Omega_{t+1})$ where $\mathcal{Q}_{t+1}(\omega^t)$ is the set of all possible priors for the $(t+1)$-th period given the state $\omega^t$ at time $t$, for all $t \in \{0, \dots , T-1\}$. The following assumption allows us to perform measurable selection (see Proposition~\ref{prop_conseq_PD_axiom}).
\begin{assumption}[Measurability of the beliefs]
\label{assumption:graph_Q}
For all $t \in \{0,\dots,T-1\} $, $\mathcal{Q}_{t+1} : \Omega^t \twoheadrightarrow \mathfrak{P}(\Omega_{t+1})$ is a nonempty and convex-valued random set such that
\[
\mathrm{Graph}(\mathcal{Q}_{t+1}):= \bigl\{ (\omega^t, P) \in \Omega^t \times \mathfrak{P}(\Omega_{t+1}): P \in \mathcal{Q}_{t+1}(\omega^t) \bigr \}
\in \mathbf{P}\bigl(\Omega^t \times \mathfrak{P}(\Omega_{t+1})\bigr).\] 
In the case $t = 0$, we mean that $\mathcal{Q}_{1} =\mathcal{Q}_{1}(\omega^0) $ is a nonempty and convex (nonrandom) set of $\mathfrak{P}(\Omega_{1})$.
\end{assumption}

We set $SK_1=\mathfrak{P}(\Omega_{1})$. Let $t \in \{1, \ldots, T-1\}$ and $q_{t+1}(\cdot \mid \cdot) : \mathcal{B}(\Omega_{t+1}) \times \Omega^t \to \mathbb{R}$.  We say that $q_{t+1} \in SK_{t+1}$ if for all $\omega^t \in\Omega^t $, $q_{t+1}(\cdot \mid \omega^t) \in \mathfrak{P}(\Omega_{t+1})$ and $\Omega^t \ni \omega^t\mapsto q_{t+1}(\cdot \mid \omega^t)\in \mathfrak{P}(\Omega_{t+1})$ is projectively measurable. So, $SK_{t+1}$ is the set of projectively measurable stochastic kernels on $\Omega_{t+1}$ given $\Omega^t$. 


\begin{remark}[About Assumptions]
    In the setting of Bouchard and Nutz, $S_t$ is assumed to be Borel measurable and $\mathrm{Graph}(\mathcal{Q}_{t+1})$ to be an analytic set. As without the (PD) axiom, Borel functions are projective (choose $n=1$ in Definition~\ref{def_projective_mes_functions} and recall~(\ref{equation_borel_set})), and as analytic sets are projective sets (see Definition~\ref{def_projective_sets}), our assumptions are thus weaker in ZFC. 
    Under the (PD) axiom, if $\phi \in \Phi$, then $\phi$ is universally measurable (see Proposition~\ref{prop_conseq_PD_axiom}), which is the usual assumption in the quasi-sure literature. The same reasoning holds for stochastic kernels. So, our assumptions are again weaker, but we are assuming the (PD) axiom this time.
\end{remark}

Under the (PD) axiom and Assumption~\ref{assumption:graph_Q}, Proposition~\ref{prop_conseq_PD_axiom} allows us to perform measurable selection on $\mathrm{Graph}(\mathcal{Q}_{t+1}) \in \mathbf{P}(\Omega^t \times \mathfrak{P}(\Omega_{t+1}))$ and we obtain that there exists $q_{t+1} \in SK_{t+1}$ such that for all $\omega^t \in \Omega^t$ (recall that $\mathrm{proj}_{\Omega^t}\, \mathrm{Graph}(\mathcal{Q}_{t+1})=\{\mathcal{Q}_{t+1} \neq \emptyset\}=\Omega^t$ from Assumption~\ref{assumption:graph_Q}), $q_{t+1}(\cdot \mid \omega^t) \in \mathcal{Q}_{t+1}(\omega^t)$.

Now, for all $t \in \{1, \ldots, T\}$, there exists (see Remark~\ref{remark_integrals}) a unique product measure $q_1 \otimes \cdots \otimes q_t$ which belongs to $\mathfrak{P}(\Omega^t)$ and is such that for all $A^t:=A_1\times\dots\times A_t \in\Omega^t$:
\[
q_1 \otimes \cdots \otimes q_t[A^t] := \int_{A_1} \dots \int_{A_t} q_t\bigl(d\omega_t \mid (\omega_1, \dots, \omega_{t-1})\bigr) \dots q_1(d\omega_1).
\]
We are in position to define our intertemporal sets of priors $\mathcal{Q}^t \subseteq \mathfrak{P}(\Omega^t)$ for all $t \in \{1, \ldots, T\}$ by:
\[
\mathcal{Q}^t := \bigl\{q_1 \otimes q_2 \otimes \cdots \otimes q_t : q_{s+1} \in SK_{s+1},\, q_{s+1}(\cdot \mid \omega^s) \in Q_{s+1}(\omega^s), \, \forall \omega^s \in \Omega^s, \, \forall s \in \{0, \ldots, t-1\} \bigr\}.
\]
We also set $\mathcal{Q}^0 := \{\delta_{\omega_0}\}$, where $\delta_{\omega_0}$ is the Dirac measure on the single element $\omega_0$ of $\Omega^0$. If $P := q_1 \otimes q_2 \otimes \cdots \otimes q_T \in Q^T$, we write for any $t \in \{1, \ldots, T\}$, $P^t := q_1 \otimes q_2 \otimes \cdots \otimes q_t$ and $P^t \in \mathcal{Q}^t$. In this paper, we mostly work directly on the disintegration of $P$ rather than $P$.

\begin{remark}[Integrals, product measures]\label{remark_integrals}
   Let $X $ be a Polish space. Let $f : X \to \mathbb{R} \cup \{-\infty, +\infty\} $ be a universally measurable function and let $p \in \mathfrak{P}(X) $. We define the $(- \infty)$ integral denoted by $\int_- f dp$ and the $(+\infty)$ integral denoted by $\int^- f dp$ as follows. When $\int f^+ dp < +\infty$ or $\int f^- dp < +\infty$, both integrals are equal and are defined as the extended integral of $f $:
    \[  
     \int_- f dp = \int^- f dp :=\int f^+ dp - \int f^- dp.
    \]
Otherwise, $\int_- f dp := -\infty$ and $\int^- f dp := +\infty$. The condition of no-arbitrage is mainly applied in the context of maximization problems (portfolio choice, super-replication). That's why we use the $\int _- $ integral and the associated arithmetic rule $\infty-\infty=-\infty+\infty=-\infty.$ In the rest of the paper, we simply denote $\int_- f dp $ by $\int f dp$ if no further precision is necessary.

   We have seen in Proposition~\ref{prop_conseq_PD_axiom} that under the (PD) axiom, any projective set $A$ is universally measurable. This allows us to define $p[A]$ for any probability measure $p$ and, more generally, to use classical measure theory results in the projective context. First, any projective function $f$ is universally measurable (see Proposition~\ref{prop_conseq_PD_axiom}) so that $\int f dp$ (as defined above) is well-defined. Moreover, it is possible to construct a unique probability measure on the product space from projectively measurable stochastic kernels and also to use Fubini\rq{}s theorem when $f$ is projective (see \cite[Proposition~7.45~p.175]{Bertsekas1978}). So, the sets $(Q^t)_{t \in \{0,\dots,T-1\}}$ are indeed well-defined.

\end{remark}

\begin{definition}[Supports]\label{def_conditional_supports}
    Let $t \in \{0,\dots,T-1\} $ and $P \in \mathcal{Q}^T $ with the fixed disintegration 
   $
    P:= q_1 \otimes \dots \otimes q_T.
    $
    The random sets $E^{t+1}: \Omega^t \times \mathfrak{P}(\Omega_{t+1}) \twoheadrightarrow \mathbb{R}^d $, $D^{t+1}: \Omega^t \twoheadrightarrow \mathbb{R}^d $ and $D_P^{t+1}: \Omega^t \twoheadrightarrow \mathbb{R}^d $ are defined by
    \begin{align*}
    E^{t+1}(\omega^t, p) &:= \bigcap \bigl\{ A \subseteq \mathbb{R}^d:\text{closed, } p[ \Delta S_{t+1}(\omega^t, \cdot) \in A ] = 1 \bigr\}, \\
    D^{t+1}(\omega^t) &:= \bigcap \{ A \subseteq \mathbb{R}^d :\text{closed, } p[\Delta S_{t+1}(\omega^t, \cdot) \in A]= 1, \ \forall p \in \mathcal{Q}_{t+1}(\omega^t) \}, \\
    D^{t+1}_P(\omega^t) &:= \bigcap \{ A \subseteq \mathbb{R}^d : \text{closed, } q_{t+1}(\Delta S_{t+1}(\omega^t, \cdot) \in A \mid \omega^t) = 1 \}.
    \end{align*}
    We call $D^{t+1}$ the quasi-sure support of $\Delta S_{t+1}$ and  $D^{t+1}_P$ the support of $\Delta S_{t+1}$ relatively to $P$.
\end{definition}
If $R\subseteq \mathbb{R}^d$, $\mathrm{Aff}(R)$ denotes the smallest affine set containing $R$, $\mathrm{Conv}(R)$ denotes the smallest convex set containing $R$ and if $R$ is convex, $\mathrm{Ri(R)}$ is the interior of $R$ relatively to $\mathrm{Aff}(R)$. 

\begin{remark}\label{EP_included_D}
    For all $\omega^t \in \Omega^t$ and all $p \in \mathcal{Q}_{t+1}(\omega^t)$, $E^{t+1}(\omega^t, p) \subseteq D^{t+1}(\omega^t)$. Indeed, let $p \in \mathcal{Q}_{t+1}(\omega^t)$. 
Using the one-period result  \cite[Lemma 4.2]{bouchard2015} (which just requires that $\mathcal{Q}_{t+1}(\omega^t)$ is a nonempty set of probability measures), we get that $D^{t+1}(\o^{t})$ is a closed set of $\mathbb{R}^d$
such that   
$p[\Delta S_{t+1}(\o^{t},\cdot) \in D^{t+1}(\o^{t})]=1$ (it is even the smallest). By definition of $E^{t+1}(\omega^t, p)$ as an intersection of such  sets, $E^{t+1}(\omega^t, p)\subseteq D^{t+1}(\omega^t)$. 
    
    If $P:=q_1\otimes\dots\otimes q_T\in \mathcal{Q}^T$, then for all $t\in \{1,\dots,T-1\}$ and $\omega^t\in \Omega^t$, $D^{t+1}_P(\omega^t)=E^{t+1}(\omega^t,q_{t+1}(\cdot\mid \omega^t))$.

\end{remark}

We now introduce the definitions of no-arbitrage.

\begin{definition}[Quasi-sure no-arbitrage condition]
The condition $NA(\mathcal{Q}^T)$ holds true if
$$
\quad V_T^{0, \phi} \geq 0 \; \mathcal{Q}^T\mbox{-q.s.} \text{ for some }  \phi \in \Phi \implies V_T^{0, \phi} = 0 \; \mathcal{Q}^T\mbox{-q.s.}
$$

\end{definition}
\begin{definition}[Single-prior no-arbitrage condition]
Let $P \in \mathfrak{P}(\Omega^T).$ The condition $NA(P)$ holds true if
$$
\quad V_T^{0, \phi} \geq 0 \; P\mbox{-a.s.} \text{ for some }  \phi \in \Phi  \implies V_T^{0, \phi} = 0 \; P\mbox{-a.s.}
$$

\end{definition}
Note that when $\mathcal{Q}^T=\{P\}$ both notions coincide. 
\begin{definition}[Local no-arbitrage condition]
Fix $t\in \{0,\dots,T-1\}$ and $\omega^t\in\Omega^t$. The condition $NA(\mathcal{Q}_{t+1}(\omega^t))$ holds true if
\[
y\Delta S_{t+1}(\omega^t,\cdot) \geq 0 \; \mathcal{Q}_{t+1}(\omega^t)\text{-q.s. for some } y\in \mathbb{R}^d \implies y\Delta S_{t+1}(\omega^t,\cdot) = 0 \; \mathcal{Q}_{t+1}(\omega^t)\text{-q.s.}
\]
\label{NAloc}
\end{definition}

\section{Main results}
\label{mainresult}

We are now able to state the paper\rq{}s main results, which proof\rq{}s are given in Section~\ref{proof_section}.

\begin{theorem}[Characterization of $NA(\mathcal{Q}^T)$]\label{main_result}
    Assume the (PD) axiom. The following conditions are equivalent under Assumptions~\ref{assumption_prices} and~\ref{assumption:graph_Q}.
    \begin{enumerate}
        \item[(i)]  $NA(\mathcal{Q}^T)$ holds true.
        \item[(ii)]  There exists $P^*\in \mathcal{Q}^T$ such that $\mathrm{Aff}(D_{P^*}^{t+1})(\cdot)=\mathrm{Aff}(D^{t+1})(\cdot) \ \mathcal{Q}^t\mbox{-q.s.}$  and $0 \in \mathrm{Ri}(\mathrm{Conv}(D_{P^*}^{t+1}))(\cdot) \ \mathcal{Q}^t\mbox{-q.s.}$ for all $t\in \{0,\dots,T-1\}$.
    \end{enumerate}
\end{theorem}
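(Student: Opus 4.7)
The plan is to reduce the multi-period statement to a one-period problem at every node, and then assemble the pieces by projective measurable selection using the (PD) axiom.

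The first step is a localization: I would show that $NA(\mathcal{Q}^T)$ is equivalent to the statement that the local condition $NA(\mathcal{Q}_{t+1}(\omega^t))$ holds on a $\mathcal{Q}^t$-full-measure projective set for every $t \in \{0,\dots,T-1\}$. The key ingredient here is the projective analogue of Lemma A.1 of \cite{Carassus25} advertised in the introduction: it allows one to pass from $\mathcal{Q}^T$-quasi-sure positivity of $V_T^{0,\phi}$ to $\mathcal{Q}_{t+1}(\omega^t)$-quasi-sure positivity of $\phi_{t+1}(\omega^t)\Delta S_{t+1}(\omega^t,\cdot)$ on a projective full-measure set, so that a backward dynamic programming argument goes through. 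Once this is in place, the analysis at each time $t$ is purely one-period. I would then invoke the one-period characterization (stated in the Appendix) to get: $NA(\mathcal{Q}_{t+1}(\omega^t))$ holds if and only if there exists $p^* \in \mathcal{Q}_{t+1}(\omega^t)$ such that $\mathrm{Aff}(E^{t+1}(\omega^t,p^*)) = \mathrm{Aff}(D^{t+1}(\omega^t))$ and $0 \in \mathrm{Ri}(\mathrm{Conv}(E^{t+1}(\omega^t,p^*)))$.

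For the direction (i) $\Rightarrow$ (ii), I would introduce, for each $t$, the set
\[
A_t := \bigl\{(\omega^t,p) \in \mathrm{Graph}(\mathcal{Q}_{t+1}) : \mathrm{Aff}(E^{t+1}(\omega^t,p))=\mathrm{Aff}(D^{t+1}(\omega^t)),\ 0 \in \mathrm{Ri}(\mathrm{Conv}(E^{t+1}(\omega^t,p)))\bigr\}.
\]
The combination of the two previous steps guarantees that $\mathrm{proj}_{\Omega^t}(A_t)$ is $\mathcal{Q}^t$-full, and checking that $A_t$ itself is projective relies on the projectivity of the random sets $E^{t+1}$ and $D^{t+1}$ and on the fact, drawn from \cite{carassus2024}, that $\mathrm{Aff}$, $\mathrm{Conv}$, $\mathrm{Ri}$ and closed-support operations preserve projectivity. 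Then Proposition~\ref{prop_conseq_PD_axiom}(ii) supplies a projective selector $q_{t+1}^* \in SK_{t+1}$ with $(\omega^t, q_{t+1}^*(\cdot\mid\omega^t)) \in A_t$ on a $\mathcal{Q}^t$-full-measure set. Concatenating via Remark~\ref{remark_integrals} gives $P^* := q_1^* \otimes \cdots \otimes q_T^* \in \mathcal{Q}^T$, and since $D_{P^*}^{t+1}(\omega^t) = E^{t+1}(\omega^t, q_{t+1}^*(\cdot\mid\omega^t))$ by Remark~\ref{EP_included_D}, the pair of geometric conditions in (ii) is satisfied quasi-surely.

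The converse (ii) $\Rightarrow$ (i) is then almost immediate: the geometric conditions on $D_{P^*}^{t+1}$ and $D^{t+1}$, together with the one-period equivalence of Step~2 read in reverse, yield $NA(\mathcal{Q}_{t+1}(\omega^t))$ on a $\mathcal{Q}^t$-full projective set for every $t$, and the localization result then propagates this back to $NA(\mathcal{Q}^T)$.

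The main obstacle will be the projective localization: one must produce a projective (not merely universally measurable) set of $\mathcal{Q}^t$-full measure on which the local positivity holds, so that the selection theorem and the one-period results can be invoked path by path. The second, more technical, difficulty will be verifying that $A_t$ is projective, which hinges on the stability of the projective class under the geometric operators $\mathrm{Aff}$, $\mathrm{Conv}$, $\mathrm{Ri}$ applied to the random closed supports $D^{t+1}$ and $E^{t+1}$; this is precisely where the (PD) axiom and the results of \cite{carassus2024} are indispensable, since the corresponding statements fail for analytic sets alone.
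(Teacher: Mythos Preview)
Your proposal is correct and follows essentially the same architecture as the paper: localization via the projective analogue of \cite[Lemma~A.1]{Carassus25} (Proposition~\ref{equivalence_local_global}), the one-period construction (Proposition~\ref{prop_construction_P_star}), projectivity of the set $A_t$ (the paper's $\operatorname{Graph}(\mathcal{E}_{t+1})$), and measurable selection via Proposition~\ref{prop_conseq_PD_axiom}. One minor caveat: the Appendix only states and proves the forward direction of the one-period result, so for (ii)$\Rightarrow$(i) you would need to supply the easy reverse one-period argument yourself---the paper bypasses this by directly citing \cite[Lemma~1(iv)]{carassus2024} for that implication.
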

We denote by $\mathcal{H}_T$ the set containing all the probability measures $P^*$ as in (ii). So, Theorem~\ref{main_result} says that $NA(\mathcal{Q}^T)$ is equivalent to $\mathcal{H}_T \neq \emptyset$. 
Theorem~\ref{main_result} was proved by Blanchard and Carassus in the setup of Bouchard and Nutz (see \cite[Theorem 3.29]{blanchard2020}) and has been conjectured by Carassus and Ferhoune in the projective setup. The direction (ii) implies (i) has been proved there, see \cite[Lemma 1 (iv) and  Remark 3]{carassus2024}. Note that, for all $t\in \{0,\dots,T-1\}$, the $\mathcal{Q}^t$-full-measure set where   (ii) holds true is the set $\Omega_{NA}^t$ introduced in Proposition \ref{equivalence_local_global} below. 
\\

The next proposition gives the characterization of the single-prior no-arbitrage condition $NA(P)$, and so, generalizes \cite[Theorem 3]{JS98} to the projective setup. It is a direct consequence of Theorem \ref{main_result} applied to $\mathcal{Q}^{T}:=\{P\},$  with $P:=p_{1} \otimes p_{2} \otimes \dots \otimes p_{T}.$ 
This is not the case in the Bouchard and Nutz setting since  $\mbox{Graph}(p_{t})$ belongs a priori to $\mathcal{B}_{c}(\O^{t}) \otimes \mathcal{B}(\mathfrak{P}(\O_{t+1}))$ and not to the analytic sets of $\O^{t} \times  \mathfrak{P}(\O_{t+1})$.
\begin{proposition}[Characterization of $NA(P)$]
\label{singleP}
Assume the (PD) axiom.  Assume that  Assumption~\ref{assumption_prices} holds true  and let $P \in   \mathfrak{P}(\O^{T})$ with the fixed disintegration $P:=p_{1} \otimes p_{2} \otimes \dots \otimes p_{T}$ where  $p_{t} \in  SK_{t}$ for all  $t\in \{1,\ldots,T\}$. Then, the $NA(P)$ condition holds if and only if  $0 \in \mbox{Ri}(\mbox{Conv}(D_{P}^{t+1}))(\cdot)$ $P^{t}$-a.s. for all $0 \leq t \leq T-1$.
\end{proposition}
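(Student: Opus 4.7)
The plan is to derive Proposition~\ref{singleP} as a direct specialization of Theorem~\ref{main_result}. Setting $\mathcal{Q}_{t+1}(\omega^t) := \{p_{t+1}(\cdot \mid \omega^t)\}$ for every $t \in \{0, \dots, T-1\}$ and $\omega^t \in \Omega^t$ produces nonempty, convex-valued random sets whose induced intertemporal set of priors is precisely $\mathcal{Q}^T = \{P\}$. With this choice, $NA(\mathcal{Q}^T)$ coincides tautologically with $NA(P)$, and what remains is to check that the hypotheses of Theorem~\ref{main_result} are met.

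The only nontrivial verification is Assumption~\ref{assumption:graph_Q}, i.e.\ that $\mathrm{Graph}(\mathcal{Q}_{t+1}) \in \mathbf{P}(\Omega^t \times \mathfrak{P}(\Omega_{t+1}))$ for each $t$. Since $p_{t+1} \in SK_{t+1}$, the map $\omega^t \mapsto p_{t+1}(\cdot \mid \omega^t)$ is projectively measurable, and $\mathrm{Graph}(\mathcal{Q}_{t+1})$ is exactly the graph of this map. Under (PD), the graph of a projective function between Polish spaces is a projective set: it can be realized as the preimage of the closed diagonal of $\mathfrak{P}(\Omega_{t+1}) \times \mathfrak{P}(\Omega_{t+1})$ under the projective map $(\omega^t, p) \mapsto (p_{t+1}(\cdot \mid \omega^t), p)$, using that projective functions are stable under composition with Borel maps. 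This is the main technical step, and is already available as a structural property in \cite{carassus2024}. Assumption~\ref{assumption_prices} is part of the hypotheses of the proposition, so Theorem~\ref{main_result} applies. I would emphasize that this is precisely where the projective framework is essential: in the setting of Bouchard and Nutz, one would have to assume separately that $\mathrm{Graph}(p_{t+1})$ is analytic, whereas here projectivity is automatic.

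Finally, Theorem~\ref{main_result} yields that $NA(P)$ holds if and only if there exists $P^* \in \mathcal{Q}^T$ satisfying condition (ii). Since $\mathcal{Q}^T = \{P\}$, necessarily $P^* = P$; and because a single prior is present, Remark~\ref{EP_included_D} gives $D^{t+1}(\omega^t) = D_P^{t+1}(\omega^t)$ for every $\omega^t \in \Omega^t$, so the affine-hull identity in (ii) is trivially satisfied. The $\mathcal{Q}^t$-quasi-sure quantifier also collapses to $P^t$-almost surely, and condition (ii) reduces exactly to $0 \in \mathrm{Ri}(\mathrm{Conv}(D_P^{t+1}))(\cdot)$ $P^t$-a.s.\ for all $t \in \{0,\dots,T-1\}$, as claimed.
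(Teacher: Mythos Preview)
Your proof is correct and follows essentially the same route as the paper: specialize Theorem~\ref{main_result} to $\mathcal{Q}_{t+1}(\omega^t)=\{p_{t+1}(\cdot\mid\omega^t)\}$, verify Assumption~\ref{assumption:graph_Q} by showing the graph of the projective kernel is projective, and then observe that $P^*=P$ and $D^{t+1}=D_P^{t+1}$ collapse condition~(ii) to the desired statement. The only cosmetic difference is that the paper writes $\mathrm{Graph}(p_{t+1})=h^{-1}(\{0\})$ for $h(\omega^t,q)=p_{t+1}(\cdot\mid\omega^t)-q$, whereas you realize it as the preimage of the diagonal; both arguments are equivalent and neither actually requires (PD) for this particular step.
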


The last theorem generalizes \cite[Theorem 3.6]{blanchard2020} to the projective setup and is an easy consequence of Theorem \ref{main_result} and Proposition \ref{singleP}. It proposes a meaningfull caracterization of $NA(\mathcal{Q}^{T})$ by the existence of a subclass of priors $\mathcal{P}^{T} \subseteq \mathcal{Q}^{T}$ such that $\mathcal{P}^{T}$ and $ \mathcal{Q}^{T}$ have the same polar sets and $NA(P)$ holds for all $P \in \mathcal{P}^{T}$. Note that this is not the case for $\mathcal{Q}^{T}$, where some priors may lead to arbitrage. 
So, instead of $NA(\mathcal{Q}^{T})$, one may assume that every model in $\mathcal{P}^{T}$ is arbitrage-free. Under quasi-sure uncertainty, this characterization allows tractable theorems for the existence of solutions to the problem of robust utility maximisation (see  \cite{blanchard2020},  \cite{BCK19} or \cite{RaMe18}). 

\begin{theorem}[Characterization of $NA(\mathcal{Q}^T)$]
\label{TheoS}
Assume the (PD) axiom. The following conditions are equivalent under Assumptions~\ref{assumption_prices} and~\ref{assumption:graph_Q}. \begin{itemize}
 \item[(i)] $NA(\mathcal{Q}^{T})$ holds true.
 \item[(ii)]  There exists some $\mathcal{P}^{T} \subseteq \mathcal{Q}^{T}$  such that $\mathcal{P}^{T}$ and  $\mathcal{Q}^{T}$ have  the  same polar sets  and such that   $NA(P)$ holds  for all $P \in \mathcal{P}^{T}.$
\end{itemize}
\end{theorem}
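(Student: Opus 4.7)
The plan is to reduce the theorem to Theorem~\ref{main_result} and Proposition~\ref{singleP} by constructing $\mathcal{P}^T$ explicitly, via local convex mixing of disintegrations with a distinguished prior $P^{*}\in\mathcal{H}_T$. The easier implication (ii)$\Rightarrow$(i) will follow directly: if $V_T^{0,\phi}\geq 0$ $\mathcal{Q}^T$-q.s., equality of polar sets gives $V_T^{0,\phi}\geq 0$ $\mathcal{P}^T$-q.s., hence $V_T^{0,\phi}\geq 0$ $P$-a.s.\ for every $P\in\mathcal{P}^T$; the assumption $NA(P)$ then forces $P[V_T^{0,\phi}\neq 0]=0$ for each such $P$; the projective, hence universally measurable, set $\{V_T^{0,\phi}\neq 0\}$ is therefore $\mathcal{P}^T$-polar, and by equality of polar sets also $\mathcal{Q}^T$-polar, proving $NA(\mathcal{Q}^T)$.

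The main work is (i)$\Rightarrow$(ii). Starting from $NA(\mathcal{Q}^T)$, Theorem~\ref{main_result} furnishes some $P^{*}=p^{*}_1\otimes\cdots\otimes p^{*}_T\in\mathcal{H}_T$. For each $P=q_1\otimes\cdots\otimes q_T\in\mathcal{Q}^T$, I would define the kernels
\[
r_{t+1}(\cdot\mid\omega^t):=\tfrac{1}{2}p^{*}_{t+1}(\cdot\mid\omega^t)+\tfrac{1}{2}q_{t+1}(\cdot\mid\omega^t),
\]
set $R(P):=r_1\otimes\cdots\otimes r_T$, and take $\mathcal{P}^T:=\{R(P):P\in\mathcal{Q}^T\}$. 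Convexity of each $\mathcal{Q}_{t+1}(\omega^t)$ together with the closure of projective stochastic kernels under convex combinations yields $R(P)\in\mathcal{Q}^T$. Since $r_{t+1}\geq \tfrac{1}{2}q_{t+1}$ as measures on $\Omega_{t+1}$, iterating gives $R(P)\geq 2^{-T}P$, so every $R(P)$-null set is $P$-null; combined with $\mathcal{P}^T\subseteq\mathcal{Q}^T$, this shows $\mathcal{P}^T$ and $\mathcal{Q}^T$ have the same polar sets.

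It remains to verify $NA(R(P))$ for every $R(P)\in\mathcal{P}^T$ through Proposition~\ref{singleP}. Because the support of the pushforward of a half-half mixture is the union of the individual supports, I would get $D^{t+1}_{R(P)}(\omega^t)=D^{t+1}_{P^{*}}(\omega^t)\cup D^{t+1}_P(\omega^t)$ for all $\omega^t$, giving the chain $D^{t+1}_{P^{*}}\subseteq D^{t+1}_{R(P)}\subseteq D^{t+1}$. The defining relation $\mathrm{Aff}(D^{t+1}_{P^{*}})=\mathrm{Aff}(D^{t+1})$ $\mathcal{Q}^t$-q.s.\ (from $P^{*}\in\mathcal{H}_T$) then squeezes all three affine hulls together $\mathcal{Q}^t$-q.s., hence $\mathrm{Ri}(\mathrm{Conv}(D^{t+1}_{P^{*}}))\subseteq \mathrm{Ri}(\mathrm{Conv}(D^{t+1}_{R(P)}))$ and $0\in\mathrm{Ri}(\mathrm{Conv}(D^{t+1}_{R(P)}))$ $\mathcal{Q}^t$-q.s., which passes to $R(P)^t$-a.s.\ since $R(P)^t\in\mathcal{Q}^t$. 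Proposition~\ref{singleP} then concludes $NA(R(P))$.

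The two points I expect to require the most care are: first, confirming that $\omega^t\mapsto r_{t+1}(\cdot\mid\omega^t)$ is itself a projectively measurable stochastic kernel, so that $R(P)$ is a genuine element of $\mathcal{Q}^T$, which should follow from the arithmetic stability of projective functions established in \cite{carassus2024}; and second, the handling of the supports, where the identity $D^{t+1}_{R(P)}=D^{t+1}_{P^{*}}\cup D^{t+1}_P$ is a one-period fact but must be propagated cleanly from $\mathcal{Q}^t$-q.s.\ to $R(P)^t$-a.s.\ statements without losing the measure-theoretic content of the defining relations of $\mathcal{H}_T$.
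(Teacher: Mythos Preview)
Your proposal is correct and follows essentially the same route as the paper: both directions are argued the same way, and your construction $\mathcal{P}^T=\{R(P):P\in\mathcal{Q}^T\}$ with $r_{t+1}=\tfrac{1}{2}p^*_{t+1}+\tfrac{1}{2}q_{t+1}$ is the special case $\ell=\tfrac{1}{2}$ of the paper's family $\ell p^*_{t+1}+(1-\ell)q_{t+1}$, $\ell\in(0,1]$. Your argument that $R(P)\geq 2^{-T}P$ by iterating the kernel inequality is in fact a cleaner substitute for the paper's appeal to \cite[Lemma~14]{CF24} (which amounts to the binomial expansion of the same product), and your verification of $NA(R(P))$ via the sandwich $D^{t+1}_{P^*}\subseteq D^{t+1}_{R(P)}\subseteq D^{t+1}$ and Proposition~\ref{singleP} mirrors the paper's Step~(iii) exactly.
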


To prove Theorem~\ref{main_result}, we start by working in a one-period framework (see Proposition~\ref{prop_construction_P_star} in the Appendix). Then, we generalize the result to the multi-period framework using measurable selection techniques to find stochastic kernels.  
However, it requires first proving that the quasi-sure no-arbitrage is consistent with the local no-arbitrage at each time step. This is Proposition~\ref {equivalence_local_global}, which generalizes \cite[Theorem 4.5]{bouchard2015} in the projective framework under the (PD) axiom.

\begin{proposition}[Equivalence between global and local no-arbitrage]\label{equivalence_local_global}
Assume the (PD) axiom. The following conditions are equivalent under Assumptions~\ref{assumption_prices} and~\ref{assumption:graph_Q}.
\begin{enumerate}
    \item[(i)] $NA(\mathcal{Q}^T)$ holds true.
    \item[(ii)] For all $t \in \{0, \dots, T-1\}$, there exists a projective set  $\Omega_{NA}^t$ of $\mathcal{Q}^t$-full-measure, such that $NA(\mathcal{Q}_{t+1}(\omega^t))$ holds true for all $\omega^t \in \Omega_{NA}^t$.
\end{enumerate}
\end{proposition}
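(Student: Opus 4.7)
The plan is to follow the classical strategy of \cite[Theorem 4.5]{bouchard2015}, but rebuilt inside the projective framework: Assumptions~\ref{assumption_prices} and~\ref{assumption:graph_Q} together with the (PD) axiom (through Proposition~\ref{prop_conseq_PD_axiom}) should let us perform measurable selection and keep track of the measurability class throughout. The cornerstone of the argument will be the ``key result'' announced in the introduction, which transfers a $\mathcal{Q}^{t+1}$-quasi-sure inequality into a $\mathcal{Q}_{t+1}(\omega^t)$-quasi-sure inequality for every $\omega^t$ in a projective $\mathcal{Q}^t$-full-measure set. I will use this lemma repeatedly as the bridge between intertemporal and one-step statements.

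For the implication $(ii)\Rightarrow(i)$ I would proceed by backward induction on $t\in\{0,\dots,T-1\}$. Starting from a strategy $\phi\in\Phi$ with $V_T^{0,\phi}\ge 0$ $\mathcal{Q}^T$-q.s., I would apply the key lemma with the random variable $V_{t+1}^{0,\phi}-V_t^{0,\phi}=\phi_{t+1}\Delta S_{t+1}$ (combined with a cash shift if needed) to push the positivity inside every local conditional family $\mathcal{Q}_{t+1}(\omega^t)$, on a projective $\mathcal{Q}^t$-full-measure set. Intersecting with $\Omega_{NA}^t$ and invoking $NA(\mathcal{Q}_{t+1}(\omega^t))$ would force the one-step gain to vanish $\mathcal{Q}_{t+1}(\omega^t)$-q.s., and integrating back gives $V_t^{0,\phi}=V_T^{0,\phi}$ $\mathcal{Q}^t$-q.s., yielding $V_T^{0,\phi}=0$ $\mathcal{Q}^T$-q.s.

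For $(i)\Rightarrow(ii)$ I would define, for each $t$, the local arbitrage set
\[
N_t:=\mathrm{proj}_{\Omega^t}\Bigl\{(\omega^t,y,p): p\in\mathcal{Q}_{t+1}(\omega^t),\ p[y\Delta S_{t+1}(\omega^t,\cdot)\ge 0]=1,\ p[y\Delta S_{t+1}(\omega^t,\cdot)>0]>0\Bigr\},
\]
and set $\Omega_{NA}^t:=\Omega^t\setminus N_t$. Using Assumption~\ref{assumption_prices}, Assumption~\ref{assumption:graph_Q} and the stability results for projective functions and integrals against projective kernels from \cite{carassus2024}, one checks that $N_t$ is projective. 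Applying Proposition~\ref{prop_conseq_PD_axiom}(ii) furnishes projective selectors $y_t:N_t\to\mathbb{R}^d$ and $p_t:N_t\to\mathfrak{P}(\Omega_{t+1})$ witnessing the failure of local $NA$. Assuming for contradiction that some $N_t$ is not $\mathcal{Q}^t$-polar, I would pick $P=q_1\otimes\cdots\otimes q_T\in\mathcal{Q}^T$ with $P^t[N_t]>0$ and paste: replace $q_{t+1}(\cdot\mid\omega^t)$ by $p_t(\omega^t)$ on $N_t$ to obtain $Q\in\mathcal{Q}^T$, and define $\phi\in\Phi$ by $\phi_{t+1}:=y_t\mathbf{1}_{N_t}$ and $\phi_s:=0$ otherwise. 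Since $y_t(\omega^t)\Delta S_{t+1}\ge 0$ against \emph{every} prior in $\mathcal{Q}_{t+1}(\omega^t)$, the terminal wealth $V_T^{0,\phi}$ is $\ge 0$ $\mathcal{Q}^T$-q.s., while it is $>0$ with positive $Q$-probability, contradicting $NA(\mathcal{Q}^T)$.

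The main obstacle I anticipate is twofold and concentrated on $(i)\Rightarrow(ii)$: first, verifying that $N_t$ is genuinely projective requires tracking the projective complexity of the map $(\omega^t,y,p)\mapsto p[y\Delta S_{t+1}(\omega^t,\cdot)\in B]$ for projective $B$, which is where the closure properties from \cite{carassus2024} and the (PD) axiom are essential; second, the kernel-pasting step must produce a member of $\mathcal{Q}^T$, which demands that the pasted kernel $\widetilde{q}_{t+1}$ lie in $SK_{t+1}$ (projectively measurable in $\omega^t$) and still satisfy $\widetilde{q}_{t+1}(\cdot\mid\omega^t)\in\mathcal{Q}_{t+1}(\omega^t)$ for all $\omega^t$. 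This hinges on the projective measurability of the selector $p_t$ together with Assumption~\ref{assumption:graph_Q}, and is exactly the place where the homogeneity of the projective framework pays off compared with the mixed Borel/analytic/universal setup of \cite{bouchard2015}.
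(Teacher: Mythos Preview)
Your definition of $N_t$ in the $(i)\Rightarrow(ii)$ direction is wrong, and this breaks the contradiction argument. As written,
\[
N_t=\mathrm{proj}_{\Omega^t}\bigl\{(\omega^t,y,p):\ p\in\mathcal{Q}_{t+1}(\omega^t),\ p[y\Delta S_{t+1}(\omega^t,\cdot)\ge 0]=1,\ p[y\Delta S_{t+1}(\omega^t,\cdot)>0]>0\bigr\}
\]
is the set of $\omega^t$ where \emph{some} single prior $p\in\mathcal{Q}_{t+1}(\omega^t)$ admits an arbitrage direction, not the set where $NA(\mathcal{Q}_{t+1}(\omega^t))$ fails. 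Your selectors $y_t,p_t$ therefore only give $p_t(\omega^t)[y_t(\omega^t)\Delta S_{t+1}\ge 0]=1$, and your assertion that ``$y_t(\omega^t)\Delta S_{t+1}\ge 0$ against \emph{every} prior in $\mathcal{Q}_{t+1}(\omega^t)$'' is unjustified. A one-period counterexample: $\mathcal{Q}_1=\mathrm{Conv}\{\delta_{+1},\tfrac12\delta_{+1}+\tfrac12\delta_{-1}\}$ on $\Delta S_1\in\{\pm 1\}$. Here $NA(\mathcal{Q}_1)$ holds, yet $\omega^0\in N_0$ via $p=\delta_{+1}$, $y=1$; so your $N_0$ is not polar and your strategy $\phi_1=y_0\mathbf{1}_{N_0}$ is \emph{not} nonnegative $\mathcal{Q}^1$-q.s.

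The paper's fix is to encode the genuine failure of local $NA$ by replacing the single-$p$ condition with the infimum
\[
\inf_{p\in\mathcal{Q}_{t+1}(\omega^t)}p[y\Delta S_{t+1}(\omega^t,\cdot)\ge 0]=1,
\]
so that the selected direction is nonnegative against \emph{all} local priors. This is where an extra measurability step is needed: one must show that $(\omega^t,y)\mapsto\inf_{p\in\mathcal{Q}_{t+1}(\omega^t)}p[y\Delta S_{t+1}(\omega^t,\cdot)\ge 0]$ is projective, which follows from Assumption~\ref{assumption:graph_Q} and the projective-optimization result \cite[Proposition~8]{carassus2024}. Your list of anticipated obstacles misses this infimum entirely.

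Your sketch of $(ii)\Rightarrow(i)$ is also too optimistic: from $V_T^{0,\phi}\ge 0$ $\mathcal{Q}^T$-q.s.\ you cannot apply the localisation lemma to $\phi_{t+1}\Delta S_{t+1}$ directly, since that increment need not be nonnegative. The paper applies the lemma to $V_{t+1}^{0,\phi}$ itself and then, before invoking local $NA$, first proves $V_t^{0,\phi}\ge 0$ $\mathcal{Q}^t$-q.s.\ via the auxiliary strategy $\phi_{t+1}\mathbf{1}_{\{V_t^{0,\phi}<0\}}$; this step is not visible in your outline.
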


\section{Proofs of the main results}\label{proof_section}
\label{proofs}
\subsection{Proof of Proposition~\ref{equivalence_local_global}}
We first show Proposition~\ref{equivalence_local_global}, which allows us to prove Theorem~\ref {main_result} from the one-period result given in Proposition~\ref{prop_construction_P_star} in the Appendix. The proof that (ii) implies (i) is based on the same kind of ideas as \cite[Lemma 4.6]{bouchard2015}, using Proposition \ref{prop_conseq_PD_axiom} instead of Jankov-von Neumann\rq{}s measurable selection theorem. But the proof that (i) implies (ii) differs completely from that of \cite[Theorem 4.5]{bouchard2015}. We prove the claim by induction on $T,$ as it can be done in the single-prior case, and this is possible thanks to Corollary~\ref{corollary_section_B_omega} in the appendix, which allows us to transform $\mathcal{Q}^{t+1}\mbox{-q.s.}$ inequality to $\mathcal{Q}_{t+1}(\omega^{t})\mbox{-q.s.}$ one for all $\o^t$ in a projective set of $\mathcal{Q}^t$-full-measure. To prove the same claim, Bouchard and Nutz rely on a third condition: the existence of a certain type of martingale measure. Their proof is therefore based on a measurable selection argument for martingale measures.\\

\noindent \textit{(ii) implies (i).} \\
Assume that (ii) holds. We prove inductively on $T$ that (ii) holds. \\
If $T=1$, $NA(\mathcal{Q}^{1})=NA(\mathcal{Q}_1(\omega^0))$ holds true as $\Omega^0=\{\omega^0\}$ and $\mathcal{Q}^1 = \mathcal{Q}_1(\omega^0)$.\\
We  fix $t\in \{1,\dots,T-1\}$ and assume the claim at time $t$, i.e. that if for all $s \in \{0, \dots, t-1\}$, there exists a projective set $\Omega_{NA}^s$ of $\mathcal{Q}^s$-full-measure, such that for all $\omega^s \in \Omega_{NA}^s, \; NA(\mathcal{Q}_{s+1}(\omega^s))$ holds true, then $NA(\mathcal{Q}^t)$ holds true.\\
Now, suppose that for all $s \in \{0, \dots, t\}$, there exists a projective set $\Omega_{NA}^s$ of $\mathcal{Q}^s$-full-measure, such that for all $\omega^s \in \Omega_{NA}^s, \; NA(\mathcal{Q}_{s+1}(\omega^s))$ holds true. We prove that $NA(\mathcal{Q}^{t+1})$ holds true. Note first that $NA(\mathcal{Q}^{t})$ holds by induction. Let $\phi \in \Phi$ such that $V_{t+1}^{0, \phi} \geq 0 \; \mathcal{Q}^{t+1}\mbox{-q.s}$. Lemma~\ref{lemma_portfolio_measurability} shows that $V_{t+1}^{0, \phi}(\cdot)$ is $\mathbf{P}(\Omega^{t+1})$-measurable under Assumption~\ref{assumption_prices}. Then Corollary~\ref{corollary_section_B_omega}, under the (PD) axiom and Assumption~\ref{assumption:graph_Q}, implies that there exists $\bar{\Omega}^t\subseteq \Omega^{t}$, a projective set of $\mathcal{Q}^{t}$-full-measure, such that for all $\omega^{t} \in \bar{\Omega}^t$ 
 \begin{equation}\label{inequality_V}
    V_{t+1}^{0, \phi}(\omega^{t},\cdot) \geq 0 \;\mathcal{Q}_{t+1}(\omega^{t})\mbox{-q.s}.    
    \end{equation}
Let $\tilde{\Omega}^t := \bar{\Omega}^t\cap\Omega_{NA}^{t}$. Then, as $\tilde{\Omega}^t$ is the intersection of two projective sets, $\tilde{\Omega}^t$ is projective (see Proposition~\ref{prop_properties_projective_functions} (ii)). Moreover, $\tilde{\Omega}^t$ is also of $\mathcal{Q}^{t}$-full-measure as an intersection of full-measure sets.
    Let $\omega^{t}\in \tilde{\Omega}^t$. The previous arguments show that 
    \begin{equation}\label{inequation_proposition_equivalence_global_local}
      \phi_{t+1}(\omega^t)\Delta S_{t+1}(\omega^{t},\cdot) \geq -V_{t}^{0, \phi}(\omega^{t})\; \mathcal{Q}_{t+1}(\omega^{t})\mbox{-q.s.}  
    \end{equation}
 Assume for a moment that $\{V_t^{0,\phi}\geq 0\}$ is of $\mathcal{Q}^t$-full-measure. Then, as  $NA(\mathcal{Q}^t)$ holds, we get that $V_t^{0,\phi}= 0 \; \mathcal{Q}^{t}\mbox{-q.s.}$ Considering~(\ref{inequation_proposition_equivalence_global_local}) for $\omega^t$ in the intersection of $\tilde{\Omega}^t$ and $\{V_t^{0,\phi}= 0\}$, which is a projective set of $\mathcal{Q}^{t}$-full-measure (see Lemma~\ref{lemma_portfolio_measurability} again), we get that $$\phi_{t+1}(\omega^t)\Delta S_{t+1}(\omega^{t},\cdot) \geq 0\; \mathcal{Q}_{t+1}(\omega^{t})\mbox{-q.s.}$$
So, we can apply the local no-arbitrage $NA(\mathcal{Q}_{t+1}(\omega^t))$ to get that $\phi_{t+1}(\omega^t)\Delta S_{t+1}(\omega^{t},\cdot) = 0\; \mathcal{Q}_{t+1}(\omega^{t})\mbox{-q.s.}$ Therefore, using Fubini's theorem (recall that we are on a projective and full-measure set), it follows that $\phi_{t+1}\Delta S_{t+1}=0 \; \mathcal{Q}^{t+1}\mbox{-q.s.}$ and also $V_{t+1}^{0,\phi}=0 \; \mathcal{Q}^{t+1}\mbox{-q.s.}$, meaning that $NA(\mathcal{Q}^{t+1})$ holds as well. \\
It remains to prove that $V_t^{0,\phi}\geq 0 \; \mathcal{Q}^{t}\mbox{-q.s.}$ We consider the function $\phi^*_{t+1}=\phi_{t+1} \mathbf{1}_{\{V_t^{0,\phi}< 0\}}$. We have that $\phi^*$ is $\mathbf{P}(\Omega^{t})$-measurable (see Proposition~\ref{prop_properties_projective_functions}). Let $\omega^t\in \tilde{\Omega}^t\subseteq \bar{\Omega}^t$, we have that $\mathcal{Q}_{t+1}(\omega^{t})\mbox{-q.s.}$ 
    \begin{align*}
    \phi^*_{t+1}\Delta S_{t+1}(\omega^t,\cdot) &\geq V_t^{0,\phi}(\omega^t)1_{\{V_t^{0,\phi}<0\}}(\omega^t)+\phi^*_{t+1}(\omega^t)\Delta S_{t+1}(\omega^t,\cdot) \\
    &=V_{t+1}^{0,\phi}(\omega^t,\cdot)1_{\{V_t^{0,\phi}<0\}}(\omega^t)\geq 0,
    \end{align*}
    where we have used (\ref{inequality_V}) for the last inequality. As $\omega^t\in \tilde{\Omega}^t\subseteq \Omega_{NA}^t,$ we can apply the local no-arbitrage $NA(\mathcal{Q}_{t+1}(\omega^t))$ and we get that $
    \phi^*_{t+1}\Delta S_{t+1}(\omega^t,\cdot) =0 \;\mathcal{Q}_{t+1}(\omega^{t})\mbox{-q.s.}
    $
So, for all $\omega^t\in \tilde{\Omega}^t$, $\mathcal{Q}_{t+1}(\omega^{t})\mbox{-q.s.}$
    \begin{align*}
        0 \leq V_{t+1}^{0,\phi}(\omega^t,\cdot)1_{\{V_t^{0,\phi}<0\}}(\omega^t) &= V_t^{0,\phi}(\omega^t)1_{\{V_t^{0,\phi}<0\}}(\omega^t)+\phi^*_{t+1}(\omega^t)\Delta S_{t+1}(\omega^t,\cdot)\\
        &= V_t^{0,\phi}(\omega^t)1_{\{V_t^{0,\phi}<0\}}(\omega^t) \leq 0.
    \end{align*}
    Thus, $ V_{t}^{0,\phi}(\omega^t)1_{\{V_t^{0,\phi}<0\}}(\omega^t)=0$ for all $\omega^t\in \tilde{\Omega}^t$ which is of $\mathcal{Q}^t$-full-measure, and $V_{t}^{0,\phi}\geq 0 \; \mathcal{Q}^t\mbox{-q.s.}$ follows.\\

\noindent \textit{(i) implies (ii).} \\
Suppose now that $NA(\mathcal{Q}^{T})$ holds true. Fix $t\in \{0,\dots,T-1\}$.  Let 
$$\Omega_{NA}^t:= \bigl\{\omega^t\in \Omega^t:NA(\mathcal{Q}_{t+1}\bigl(\omega^t)\bigr) \; \mbox{holds}\bigr\}.$$
\noindent \textit{Step1: $\Omega_{NA}^t$ is a projective set.}\\
First, we rewrite the set $N^t$ where the local no-arbitrage fails:
    \vspace{-1em}
    \begin{align*}
        N^t&:= \Omega^t \setminus  \Omega_{NA}^t= \bigl\{\omega^t\in \Omega^t:NA\bigl(\mathcal{Q}_{t+1}(\omega^t)\bigr) \; \mbox{fails}\bigr\}\\
        &= \bigl\{\omega^t\in \Omega^t: \exists y\in \mathbb{R}^d,\exists q\in\mathcal{Q}_{t+1}(\omega^t) \;\mathrm{s.t.} \inf_{p\in\mathcal{Q}_{t+1}(\omega^t)} p \bigl[ y\Delta S_{t+1}(\omega^t,\cdot)\geq 0 \bigr] = 1 \mbox{ and } q \bigl[ y\Delta S_{t+1}(\omega^t,\cdot)> 0\bigr] > 0 \bigr\} \\
        &=\mathrm{proj}_{\Omega^t}\bigl[\{(\omega^t,q,y)\in \Omega^t \times \mathfrak{P}(\Omega_{t+1}) \times \mathbb{R}^d: q\in\mathcal{Q}_{t+1}(\omega^t),\; \lambda_{\mathrm{inf}}(\omega^t,q,y)=1 \;\mathrm{and}\;\lambda(\omega^t,q,y)\in (0,1] \}\bigr]\\
        &=\mathrm{proj}_{\Omega^t}\Bigl[ \bigl(\mathrm{Graph(\mathcal{Q}_{t+1})\times \mathbb{R}^d}\bigr) \bigcap \{\lambda_{\mathrm{inf}}=1\} \bigcap  \{\lambda\in ( 0,1 ]\}\Bigr]= \mathrm{proj}_{\Omega^t}(A),
    \end{align*}
    where $A:=( \mathrm{Graph(\mathcal{Q}_{t+1})\times \mathbb{R}^d} ) \cap \{\lambda_{\mathrm{inf}}=1\} \cap  \{\lambda\in ( 0,1 ]\}$ and the functions $\lambda$ and $\lambda_{\mathrm{inf}}$ are defined as follows:
    \begin{align*}
        \lambda:
        &\begin{cases}
            \Omega^t \times \mathfrak{P}(\Omega_{t+1}) \times \mathbb{R}^d \to \mathbb{R}\\
            (\omega^t,q,y)\mapsto q[y \Delta S_{t+1}(\omega^t,\cdot)> 0]=\displaystyle \int_{-}\mathbf{1}_{\{y \Delta S_{t+1}(\omega^t,\omega_{t+1})>0\}}q(d\omega_{t+1})
        \end{cases}\\
        \lambda_{\mathrm{inf}}:
        &\begin{cases}
            \Omega^t \times \mathfrak{P}(\Omega_{t+1}) \times \mathbb{R}^d \to \mathbb{R}\\
            (\omega^t,q,y)\mapsto\inf_{p\in \mathcal{Q}_{t+1}(\omega^t)}p[y \Delta S_{t+1}(\omega^t,\cdot)\geq0].
        \end{cases}
    \end{align*}
We now prove that $A$ is a projective set. Using Assumption~\ref{assumption:graph_Q} and $\mathbb{R}^d \in \mathcal{B}(\mathbb{R}^d)\subseteq \mathbf{P}(\mathbb{R}^d)$, we get that $\mathrm{Graph}(\mathcal{Q}_{t+1})\times \mathbb{R}^d \in \mathbf{P}(\Omega^{t}\times\mathfrak{P}(\Omega_{t+1})\times \mathbb{R}^d)$ (see Proposition~\ref{prop_properties_projective_functions} (iii)). Assume for a moment that $\lambda$ and $\lambda_{\mathrm{inf}}$ are projective functions. Then, $\{\lambda_{\mathrm{inf}}=1\} $ and $ \{\lambda\in ( 0,1 ]\}$ belong to $\mathbf{P}(\Omega^t\times\mathfrak{P}(\Omega_{t+1})\times \mathbb{R}^d)$.  
 Now, Proposition~\ref{prop_properties_projective_functions} (ii) provides closeness under finite intersections, implying that $A\in\mathbf{P}(\Omega^t\times\mathfrak{P}(\Omega_{t+1})\times \mathbb{R}^d)$ as well, and also stability under projection and complement, resulting in $N^t\in \mathbf{P}(\Omega^t)$ and  $\Omega_{NA}^{t}=\Omega^t\setminus N^t \in \mathbf{P}(\Omega^t).$ \\
It remains to prove that $\lambda$ and $\lambda_{\mathrm{inf}}$ are projective.   Let $J=(0,+\infty)$ or $J=[0,+\infty).$ 
Applying Proposition \ref{prj_cvt_pj}  to the stochastic kernel $p$ defined by $p(d\omega_{t+1}|(\omega^t,y,q))=q(d\omega_{t+1})$, which is 
Borel and thus projectively measurable and to 
$f(\omega^t,y,q,\o_{t+1})=\mathbf{1}_{\{y \Delta S_{t+1}(\omega^t,\omega_{t+1}) \in J\}},$  which is  projective (see Assumption \ref{assumption:graph_Q} and Proposition \ref{prop_properties_projective_functions}), we obtain that  
$$\alpha^J: \Omega^t \times \mathbb{R}^d\times \mathfrak{P}(\Omega_{t+1}) \ni (\omega^t,y,q) \mapsto \displaystyle \int_{-}\mathbf{1}_{\{y \Delta S_{t+1}(\omega^t,\omega_{t+1}) \in J\}}q(d\omega_{t+1})$$ 
is a projective function.  Using \cite[Proposition 8]{carassus2024} with 
$D=\{(\omega^t,y,q)\in \Omega^t \times \mathbb{R}^d\times \mathfrak{P}(\Omega_{t+1}): q \in \mathcal{Q}_{t+1}(\o^t)\}$ we get that 
$$\alpha^J_{\mathrm{inf}}: \Omega^t \times \mathbb{R}^d  \ni (\omega^t,y)\mapsto \inf_{q\in\mathcal{Q}_{t+1}(\omega^t)} \alpha^J(\omega^t,y,q)$$ is also projective. 
 Then,  as measurability is preserved by composition with Borel (thus projective) functions (see Proposition \ref{prop_properties_projective_functions} (vi)), we conclude by remarking that $\lambda=\alpha^{(0,+\infty)} \circ \iota$ where $\iota:\Omega^t \times \mathfrak{P}(\Omega_{t+1}) \times \mathbb{R}^d \ni (\omega^t,q,y) \mapsto (\omega^t,y,q)$ is Borel and $\lambda_{\mathrm{inf}}=\alpha^{[0,+\infty)}_{\mathrm{inf}} \circ \rho$ where $\rho:  \Omega^t \times \mathfrak{P}(\Omega_{t+1}) \times \mathbb{R}^d \ni (\omega^t,q,y)\in \mapsto (\omega^t,y)$ is also Borel. 


\noindent \textit{Step2: $\Omega_{NA}^t$ is a $\mathcal{Q}^t$-full-measure set.}\\    
Suppose by contraposition that $N^t$ is not $\mathcal{Q}^t$-polar. This means that $P^t[N^t]>0$ for some $P\in \mathcal{Q}^T$ having the disintegration $P:=p_1\otimes\dots\otimes p_T$. We apply now measurable selection to find an intertemporal strategy of arbitrage contradicting the quasi-sure no-arbitrage hypothesis.
    
    As $A\in\mathbf{P}(\Omega^t\times\mathfrak{P}(\Omega_{t+1})\times \mathbb{R}^d)$, Proposition~\ref{prop_conseq_PD_axiom} gives the existence of a function $\Xi=(q^*,\phi^*):\mathrm{proj}_{\Omega^t}(A)=N^t\to \mathfrak{P}(\Omega_{t+1})\times \mathbb{R}^d$, $\mathbf{P}(\Omega^t)$-measurable and such that $\mathrm{Graph}(\Xi)\subseteq A$.  
So, for all $\omega^t\in N^t,$ $q^*(\o^t)$ is a probability measure on $\Omega_{t+1}$ and we  write $q^*(\omega^t)=q^*(\cdot \mid \omega^t).$ 
Moreover, $ N^t \ni \omega^t  \mapsto  q^*(\cdot \mid \omega^t) \in \mathfrak{P}(\Omega_{t+1})$  is $\mathbf{P}(\Omega^t)$-measurable, 
and the inclusion $\mathrm{Graph}(\Xi)\subseteq A$ implies that for all $\omega^t\in N^t$, $q^*(\cdot \mid \omega^t)\in \mathcal{Q}_{t+1}(\omega^t).$ 
We also have that $\phi^*$ is $\mathbf{P}(\Omega^t)$-measurable and for all $\omega^t\in N^t$
    \begin{equation}\label{eq_inf_proof_Omega_NA}
     \inf_{p\in \mathcal{Q}_{t+1}(\omega^t)}p[\phi^*(\omega^t) \Delta S_{t+1}(\omega^t,\cdot)\geq 0]=1 \; \mathrm{and} \; q^*\bigl(\phi^*(\omega^t)\Delta S_{t+1}(\omega^t,\cdot) >0\mid \omega^t \bigr)>0.
    \end{equation}
    
    We set $\hat{\phi}_{t+1} := \phi^* $ on $N^t$, $\hat{\phi}_{t+1} := 0 $ on $\Omega^t\setminus N^t$, and $\hat{\phi}_s:=0$ for $s\neq t+1$. We also set $\hat{q}:=q^*$ on $N^t$, $\hat{q}:=\tilde{q}$ on $\Omega^t\setminus N^t$, where $\tilde{q} \in SK_{t+1}$ is such that $\tilde{q}(\cdot\mid\omega^t) \in \mathcal{Q}_{t+1}(\omega^t)$ for all $\omega^t\in \Omega^t$ ($\tilde{q}$ is obtained by performing measurable selection on $\mathrm{Graph}(\mathcal{Q}_{t+1})$ as Assumption~\ref{assumption:graph_Q} holds). This defines a strategy and a stochastic kernel, which are indeed projectively measurable (see Proposition~\ref{prop_properties_projective_functions} and the proof of Lemma~\ref{lemma_section_B_omega} where similar results are proved with more details). We now show that $\hat{\phi}$ is an arbitrage.
    
    Let $s\in \{1,\dots,T\}$. By construction of $\hat{\phi}\in \Phi$, for all $\omega^{s-1}\in \Omega^{s-1}$, $\hat{\phi}_s(\omega^{s-1}) \Delta S_s(\omega^{s-1},\cdot) \geq 0 \; \mathcal{Q}_s(\omega^{s-1})\mbox{-q.s.}$ (see (\ref{eq_inf_proof_Omega_NA})). Fubini's theorem is then applied to obtain that $\hat{\phi}_s \Delta S_s \geq 0 \; \mathcal{Q}^T\mbox{-q.s}$. We conclude that $$\displaystyle \sum_{s=1}^T\hat{\phi}_s \Delta S_s \geq 0 \; \mathcal{Q}^T\mbox{-q.s}.$$
    
    Moreover, we define $\hat{P} := P^t \otimes \hat{q} \otimes p_{t+2} \otimes \dots \otimes p_T.$    
    Then $\hat{P} \in \mathcal{Q}^T$ by construction, and using Fubini's theorem:
    \begin{align*}        
        \hat{P} \Bigl[ \sum_{s=1}^T \hat{\phi}_s \Delta S_s > 0 \Bigr] &= 
        \int_{\Omega^T} \mathbf{1}_{\{ \sum_{s=1}^T \hat{\phi}_s \Delta S_s > 0\}}(\omega^{T})
        \, \hat{P}(d\omega^{T}) \notag \\
        &= \int_{\Omega^{t+1}}\mathbf{1}_{\{ \hat{\phi}_{t+1} \Delta S_{t+1} > 0 \}}(\omega^{t+1}) 
        \, P^t\otimes \hat{q}(d\omega^{t+1}) \notag \\
        &= \int_{\Omega^t}   \int_{\Omega_{t+1}} \mathbf{1}_{\{ \hat{\phi}_{t+1} \Delta S_{t+1} > 0 \}}(\omega^t,\omega_{t+1}) 
        \,  \hat{q}(d\omega_{t+1}\mid \omega^{t})P^t(d\omega^{t}) \notag \\
        &=\int_{N^t}  q^*(\phi^*(\omega^t)\Delta S_{t+1}(\omega^t,\cdot) >0\mid \omega^t)P^t(d\omega^{t}) > 0,
    \end{align*}
    as the integral of a strictly positive function (see (\ref{eq_inf_proof_Omega_NA})) on a non-null set (relative to the measure $P^t$). 
    So, $\hat{\phi}$ is an intertemporal arbitrage, which contradicts $NA(\mathcal{Q}^T)$.

\subsection{Proof of Theorem ~\ref{main_result}.}

\textit{Reverse implication.} 

This is proved in \cite[Lemma 1 (iv)]{carassus2024}.\\

\textit{Direct implication.}\\
The proof is an adaptation of \cite[Theorem 3.29]{blanchard2020}'s one to the projective setup. 
Assume that $NA(\mathcal{Q}^T)$ holds true. Proposition~\ref{equivalence_local_global} shows that for all $t \in \{0, \dots, T-1\}$, there exists a projective set $\Omega_{NA}^t$ of $\mathcal{Q}^t$-full-measure, such that $NA(\mathcal{Q}_{t+1}(\omega^t))$ holds true for all $\omega^t \in \Omega_{NA}^t$.

\textit{Construction of $\mathcal{E}_{t+1}$ and measurable selection.}\\
Fix  $t \in \{0, \dots, T-1\}$. Let $\mathcal{E}_{t+1} : \Omega^t \twoheadrightarrow \mathfrak{P}(\Omega_{t+1})$ be defined 
for all $\omega^t \in \Omega^t$  by 
\[
\mathcal{E}_{t+1}(\omega^t) := \bigl\{ p \in \mathcal{Q}_{t+1}(\omega^t): \, 0 \in \text{Ri} \bigl( \text{Conv}(E^{t+1}) (\omega^t, p) \bigr) \mbox{ and } \text{Aff} ( E^{t+1})(\omega^t, p) = \text{Aff} \bigl( D^{t+1} \bigr)(\omega^t) \bigr\}.
\]
Let $\o^t \in \O^t$. Recalling Definitions \ref{NAloc} and \ref{NAone} and applying Proposition~\ref{prop_construction_P_star} in the Appendix, we get that 
\[
\begin{aligned}
\text{NA}\bigl(\mathcal{Q}_{t+1}(\omega^t)\bigr) \text{ holds true}
&\;\Longrightarrow\;
\exists\,p \in \mathcal{Q}_{t+1}(\omega^t) \;\text{with}\;
0 \in \mathrm{Ri}\Bigl(\mathrm{Conv}\,\bigl(E^{t+1}(\omega^t,p)\bigr)\Bigr) 
\\&\phantom{\;\Longleftrightarrow\;}\;\;
\text{and}\;\;
\mathrm{Aff}\bigl(E^{t+1}(\omega^t,p)\bigr)
=
\mathrm{Aff}\bigl(D^{t+1}(\omega^t)\bigr) \\
&\;\Longleftrightarrow\;
\mathcal{E}_{t+1}(\omega^t) \neq \emptyset.
\end{aligned}
\]
Thus, we deduce that $\Omega^t_{NA} \subseteq \{ \mathcal{E}_{t+1} \neq \emptyset \} $. Suppose for a moment that $\operatorname{Graph} (\mathcal{E}_{t+1} )$ is a projective set. Using Proposition~\ref{prop_conseq_PD_axiom} for $\operatorname{Graph} (\mathcal{E}_{t+1} )$  gives the existence  for all  $\o^t \in  \Omega^t_{NA}$ of 
$\hat p_{t+1}(\cdot|\o^t)  \in  \mathfrak{P}(\Omega_{t+1})$   such that 
$\Omega^t_{NA} \ni \o^t  \mapsto \hat p_{t+1}(\cdot|\o^t)\in  \mathfrak{P}(\Omega_{t+1})$ is projectively measurable  and  $\hat p_{t+1}(\cdot| \omega^t) \in \mathcal{E}_{t+1}(\omega^t) $ for every $\omega^t \in \Omega^t_{NA}$. Indeed, we have that $\text{proj}_{\O^t}\text{Graph}(\mathcal{E}_{t+1})= \{ \mathcal{E}_{t+1} \neq \emptyset \} \supseteq \Omega^t_{NA}. $ Let $\tilde{q}_{t+1}\in SK_{t+1} $ be obtained by performing measurable selection on $\mathrm{Graph}(\mathcal{Q}_{t+1})$ as Assumption~\ref{assumption:graph_Q} holds. We set ${q}^*_{t+1}:=\hat p_{t+1}$ on $\Omega^t_{NA}$ and  ${q}^*_{t+1}:=\tilde{q}_{t+1}$ on $\Omega^t\setminus\Omega^t_{NA}$. 
Define $P^* := p^*_1 \otimes \cdots \otimes p^*_T $. By construction of $P^*,$ as $\Omega^t_{NA}$  and $\Omega^t\setminus\Omega^t_{NA}$ are projective sets, we have that $P^* \in Q^T $. Furthermore, using Remark~\ref{EP_included_D} and  ${q}^*_{t+1}:=\hat p_{t+1}$ on $\Omega^t_{NA}$, we obtain for all $\omega^t \in \Omega^t_{NA}$ that 
\begin{eqnarray*}
\text{Aff} \bigl( D^{t+1}_{P^*} \bigr) (\omega^t) = \text{Aff} \bigl( E^{t+1} \bigr)\bigl(\omega^t, p^*_{t+1}(\cdot| \omega^t)\bigr) = \text{Aff} \bigl( E^{t+1} \bigr)\bigl(\omega^t, \hat{p}_{t+1}(\cdot| \omega^t)\bigr)=\text{Aff} \bigl( D^{t+1} \bigr)(\omega^t)
\\
0 \in \text{Ri} \Bigl( \text{Conv}(E^{t+1}) \bigl(\omega^t, \hat{p}_{t+1}(\cdot| \omega^t)\bigr) \Bigr) =\text{Ri} \Bigl( \text{Conv}(E^{t+1}) \bigl(\omega^t, p^*_{t+1}(\cdot| \omega^t)\bigr) \Bigr) = \text{Ri} \bigl( \text{Conv}(D^{t+1}_{P^*}) \bigr)(\omega^t),
\end{eqnarray*}
and this will conclude the proof as $\Omega_{NA}^t$ is a $\mathcal{Q}^t$-full-measure set.  

\textit{Proof of $\operatorname{Graph} (\mathcal{E}_{t+1} )\in 
\mathbf{P}(\Omega^t \times \mathfrak{P}(\Omega_{t+1})).$}\\
Let
\begin{eqnarray*}
B &  :=  &\bigl\{ (\omega^t, p)\in \Omega^t\times\mathfrak{P}(\Omega_{t+1}): \ \operatorname{Ri} \bigl(\overline{\operatorname{Conv}}(E^{t+1}) \bigr)(\omega^t, p) \cap \{ 0 \} \neq \emptyset \bigr\}\\ 
C & := & \bigl\{ (\omega^t, p)\in \Omega^t\times\mathfrak{P}(\Omega_{t+1}): \ \text{Aff} \bigl( E^{t+1} \bigr) (\omega^t, p) = \text{Aff} \bigl( D^{t+1} \bigr)(\omega^t) \bigr\}.
\end{eqnarray*}
Recall from Proposition~\ref{lemma_measurability_sets} (i) et (iii) that 
$
 \text{Ri} (\overline{\operatorname{Conv}}(E^{t+1}))
$
 is closed-valued and $\Delta^1_n (\Omega^t \times \mathfrak{P}(\Omega_{t+1})) $-measurable for some $n\geq 1$ and  that $\text{Aff} (D^{t+1} )$ 
is $\Delta^1_q(\Omega^t ) $-measurable for some $q\geq 1$. We assume without loss of generality that $q\leq n$, changing $n$ into  $\max(n,q)$ if necessary. So, using 
Proposition~\ref{prop_properties_projective_functions} (i) $\text{Aff} (D^{t+1} )$ 
is also $\Delta^1_n(\Omega^t ) $-measurable. \\
We apply \cite[Theorem 14.3]{RockafellarWets1998} in the measurable space $(\Omega^t \times \mathfrak{P}(\Omega_{t+1}), \Delta^1_n(\Omega^t \times \mathfrak{P}(\Omega_{t+1}))) $ and   we conclude that  $B  \in \Delta^1_n(\Omega^t \times \mathfrak{P}(\Omega_{t+1}))$. 
It also implies that $B \in \mathbf{P}(\Omega^t \times \mathfrak{P}(\Omega_{t+1})) $.\\
Let $h : \Omega^t \times \mathfrak{P}(\Omega_{t+1}) \to \mathbb{R} $ be defined by
\begin{eqnarray}
\label{eq:distance_definition}
h(\omega^t, p) := d \bigl( \operatorname{Aff} \bigl( E^{t+1}(\omega^t, p) \bigr), \operatorname{Aff} \bigl( D^{t+1}(\omega^t) \bigr) \bigr)
= \sup_{x \in \mathbb{R}^d} \Big| d \bigl( x, \operatorname{Aff} \bigl( E^{t+1}(\omega^t, p) \bigr) \bigr) 
- d \bigl( x, \operatorname{Aff} \bigl( D^{t+1}(\omega^t) \bigr) \bigr) \Big|. 
\end{eqnarray}
Here $d(F, G) $ is the Hausdorff distance between two nonempty sets $F, G \subseteq \mathbb{R}^d $, see for instance \cite[Definition 3.70 and Lemma 3.74]{AliprantisBorder2006} and $d(x, F) = \inf \{ |x-y|: \, y \in F \}.$ 
Proposition~\ref{lemma_measurability_sets} (i) also shows that $\text{Aff} ( E^{t+1})$ 
is $\Delta^1_n(\Omega^t \times \mathfrak{P}(\Omega_{t+1})) $-measurable and 
applying  \cite[Theorem 18.5]{AliprantisBorder2006} with the same measurable space as before, 
we  conclude that 
$$
\Omega^t \times \mathfrak{P}(\Omega_{t+1}) \times \mathbb{R}^d \ni \bigl((\omega^t, p), x\bigr)  \mapsto d\Bigl(x, \operatorname{Aff}\bigl(E^{t+1}(\omega^t, p)\bigr)\Bigr)
$$
is a Caratheodory function. This means that for every $x \in \mathbb{R}^d $, 
$
\Omega^t \times \mathfrak{P}(\Omega_{t+1}) \ni (\omega^t, p)  \mapsto d(x, \operatorname{Aff}(E^{t+1}(\omega^t, p)))
$
is $\Delta^1_n(\Omega^t \times \mathfrak{P}(\Omega_{t+1}))$-measurable and 
for every $(\omega^t, p) \in \Omega^t \times \mathfrak{P}(\Omega_{t+1})$,  
$
\mathbb{R}^d \ni x \ \mapsto d(x, \operatorname{Aff}(E^{t+1}(\omega^t, p)))
$
is continuous. 
Now, we apply again
  \cite[Theorem 18.5]{AliprantisBorder2006} to $\text{Aff} (D^{t+1} )$ with the measurable space 
  $(\Omega^t, \Delta^1_n(\Omega^t ))$ and  
we get that 
$$
\Omega^t  \times \mathbb{R}^d  \ni (\omega^t, x) \mapsto d\Bigl(x, \operatorname{Aff}\bigl(D^{t+1}(\omega^t)\Bigl)\Bigl)
$$
is a Caratheodory function, which implies that  for every 
$x \in \mathbb{R}^d $, 
$
\Omega^t  \ni \omega^t  \mapsto d(x, \operatorname{Aff}(D^{t+1}(\omega^t)))
$
is $\Delta^1_n(\Omega^t )$-measurable  and 
for every $\omega^t\in \Omega^t $,  the function 
$
\mathbb{R}^d  \ni x \mapsto d(x, \operatorname{Aff}(D^{t+1}(\omega^t)))
$
is continuous. So, $
\mathbb{R}^d  \ni x  \mapsto | d \bigl( x, \operatorname{Aff} ( E^{t+1}(\omega^t, p) ) ) 
- d ( x, \operatorname{Aff} ( D^{t+1}(\omega^t) ) ) |
$ is continuous and 
we can replace $\mathbb{R}^d $ with $ \mathbb{Q}^d $ in \eqref{eq:distance_definition}. 
Then, Proposition~\ref{prop_properties_projective_functions} (v) and (viii)  shows that 
$(\omega^t, p)
\mapsto | d \bigl( x, \operatorname{Aff} ( E^{t+1}(\omega^t, p) ) ) 
- d ( x, \operatorname{Aff} ( D^{t+1}(\omega^t) ) ) |
$
is $\Delta^1_n(\Omega^t \times \mathfrak{P}(\Omega_{t+1})) $-measurable and that  $h $ is also $\Delta^1_n(\Omega^t \times \mathfrak{P}(\Omega_{t+1})) $-measurable, as a countable supremum. So, we obtain that 
$$C = h^{-1}(\{0\})  \in \Delta^1_n\bigl(\Omega^t \times \mathfrak{P}(\Omega_{t+1})\bigr)  \subseteq \mathbf{P}\bigl(\Omega^t \times \mathfrak{P}(\Omega_{t+1})\bigr).$$ 
As $\operatorname{Ri} (\overline{\operatorname{Conv}}(E^{t+1}) )=\operatorname{Ri} ({\operatorname{Conv}}(E^{t+1}) )$, see \cite[Theorem 6.3]{rockafellar1970convex}, 
Assumption~\ref{assumption:graph_Q} and Proposition~\ref{prop_properties_projective_functions} 
 (ii) show that 
\[
\operatorname{Graph} (\mathcal{E}_{t+1}) = \operatorname{Graph} (\mathcal{Q}_{t+1}) \cap B \cap C \in 
\mathbf{P}(\Omega^t \times \mathfrak{P}\bigl(\Omega_{t+1})\bigr).
\]
Now, the proof is complete.
\begin{remark}
In this proof, it is important at several stages to work with $\Delta^1_n$ and not $\mathbf{P}$. Indeed, we use that $\Delta^1_n$ is a sigma-algebra to apply \cite[Theorem 18.5]{AliprantisBorder2006} and get Caratheodory functions. We also need that the class of $\Delta^1_n$-measurable functions is closed by countable supremum. These two properties are not true for $\mathbf{P}$. 
\end{remark}

\subsection{Proof of Proposition~\ref{singleP}.}
Let $P \in   \mathfrak{P}(\O^{T})$ with the fixed disintegration $P:=p_{1} \otimes p_{2} \otimes \dots \otimes p_{T}$ where  $p_{t} \in  SK_{t}$ for all  $t\in \{1,\ldots,T\}$. 
We want to apply Theorem \ref{main_result} to $\mathcal{Q}^{T}:=\{p_{1} \otimes p_{2} \otimes \dots \otimes p_{T}\}$. 
For that we need to prove that  $\mbox{Graph}(p_{t+1}) \in \mathbf{P}(\Omega^t \times \mathfrak{P}(\Omega_{t+1}))$ for all  $t\in \{0,\ldots,T-1\}$. Remark that 
$$\mbox{Graph}(p_{t+1})= \big\{ (\omega^t , q) \in \O^t \times \mathfrak{P}(\Omega_{t+1}): \ q=p_{t+1}(\cdot| \omega^t)  \big\}.$$
Since $p_{t+1} \in SK_{t+1},$ we get that  $h:$ $\Omega^t \times \mathfrak{P}(\Omega_{t+1})  \ni (\omega^t, q) \mapsto p_{t+1}(\cdot| \omega^t) - q \in \mathfrak{P}(\Omega_{t+1})$ is $ \mathbf{P} (\Omega^t \times \mathfrak{P}(\Omega_{t+1})) $-measurable  (see Proposition~\ref{prop_properties_projective_functions} (iv) and (viii)) and 
$$\mbox{Graph}(p_{t+1})=h^{-1}(0)\in \Delta^1_{\ell }(\Omega^t \times \mathfrak{P}(\Omega_{t+1}) )\subseteq\mathbf{P} (\Omega^t \times \mathfrak{P}(\Omega_{t+1}))
$$ for some $\ell \geq 1$, see Definition \ref{def_projective_mes_functions}.  So, Theorem \ref{main_result} with $\mathcal{Q}^{T}:=\{p_{1} \otimes p_{2} \otimes \dots \otimes p_{T}\}$ 
asserts that $NA(P)$ is equivalent to  $0 \in \mathrm{Ri}(\mathrm{Conv}(D_{P}^{t+1}))(\cdot) \ P^t\mbox{-a.s.}$ for all $t\in \{0,\dots,T-1\}$. Indeed, here as $\mathcal{Q}^{T}$ is a singleton, $P^*=P$ and $D^{t+1}=D_{P}^{t+1}.$

\subsection{Proof of Theorem ~\ref{TheoS}.}
The proof is copypaste from \cite[Theorem 3.6]{blanchard2020} and is given for the reader\rq{}s convenience. \\
{\it Step 1: Reverse implication.}\\
Assume now that there exists some $\mathcal{P}^{T} \subseteq \mathcal{Q}^{T}$ such that $\mathcal{P}^{T}$ and  $\mathcal{Q}^{T}$ have the same polar sets and the $NA(P)$ condition holds  for all $P \in \mathcal{P}^{T}$. 
If  $NA(\mathcal{P}^{T})$  fails, there exist some $\phi  \in \Phi$ and $P \in \mathcal{P}^{T}$ such that $V_{T}^{0,\phi} \geq 0 \; {\mathcal{P}^{T}}\mbox{-q.s.}$ and $P(V_{T}^{0,\phi}  > 0)>0:$ $NA(P)$  also fails. So, $NA(\mathcal{P}^{T})$ holds and  
also $NA(\mathcal{Q}^{T})$ as $\mathcal{P}^{T}$ and  $\mathcal{Q}^{T}$ have the same polar sets.\\
{\it Step 2: Direct implication.}\\
Theorem \ref{main_result} implies that there exists  some  $P^*\in \mathcal{Q}^{T}$ with the disintegration  $P^*:=p_1^{*}\otimes p_{2}^{*}\otimes \cdots \otimes p_{T}^{*}$  such that   
$\mbox{Aff}\bigl(D_{P^*}^{t+1}\bigr)(\o^{t})=\mbox{Aff} (D^{t+1})(\o^{t})$ and $0 \in \mbox{Ri} ({\mbox{Conv}}(D_{P^*}^{t+1}))(\o^{t})$
  for all  $\o^t$ in some  $\mathcal{Q}^{t}$-full-measure set, namely  $\Omega^{t}_{NA},$ and all $0 \leq t \leq T-1$. 
Let
$\mathcal{P}^{T}$ be defined recursively: $\mathcal{P}^{1}:=\{\ell p_{1}^*+ (1-\ell)p:\; p \in \mathcal{Q}^{1},\; 0<\ell \leq 1  \}$ and for all $1 \leq t \leq T-1$
 \begin{align}
\label{PstarARB}
 \mathcal{P}^{t+1}&:=\Bigl\{
 P \otimes  (\ell p^*_{t+1}+ (1-\ell) q): \ 0<\ell \leq 1, \,P \in \mathcal{P}^{t}, \,q \in SK_{t+1}, \, q(\cdot|\o^{t}) \in \mathcal{Q}_{t+1}(\o^{t}) \, \forall  \o^{t} \in \O^{t}\;    \Bigr\}.
\end{align}
\indent {\it (i) $\mathcal{P}^{t} \subseteq \mathcal{Q}^{t}$  for all $t\in \{1,\ldots,T\}$}.\\
\noindent  This follows by induction from   the convexity of $\mathcal{Q}_{t+1}(\o^{t})$; see \eqref{PstarARB} and recall that $p^{*}_{t+1}(\cdot \mid \o^{t}) \in \mathcal{Q}_{t+1}(\o^{t})$.\\
 \indent  {\it (ii)  $\mathcal{Q}^{t}$ and $\mathcal{P}^{t}$ have the same polar-sets  for all $t\in \{1,\ldots,T\}$}.\\
Fix some $t\in \{1,\ldots,T\}$.   As  $\mathcal{P}^{t} \subseteq \mathcal{Q}^{t}$,  it is clear that  a $\mathcal{Q}^{t}$-polar set is also a  $\mathcal{P}^{t}$-polar set. The other direction follows from \eqref{Pnt_pj} below, which is proved in \cite[Lemma 14]{CF24} by induction on $t$. Let $Q^t:= q_1 \otimes \cdots \otimes q_t\in\mathcal{Q}^t, $ then there exist some $(R_k^t)_{0\leq k\leq t-1}\subset \textup{Conv}(\mathcal{Q}^t)$ such that  
\begin{eqnarray}
P^t := \frac1{2^t} \biggl(Q^t +  \sum_{k=0}^{t-1} \binom{t}{k}  R_k^t\biggr) \in \mathcal{P}^{t}. \label{Pnt_pj}
\end{eqnarray}
So, 
$Q^{t} \ll P^{t}$ and   a $\mathcal{P}^{t}$-polar set is also a  $\mathcal{Q}^{t}$-polar set. \\
 \indent {\it (iii) $NA(P)$ holds for all $P\in \mathcal{P}^{T}$}. \\
Fix  some $P:=p_{1} \otimes p_{2} \otimes \cdots \otimes p_{T}\in \mathcal{P}^{T} \subseteq  \mathcal{Q}^{T}$, some $0 \leq t \leq T-1$ and $\o^{t} \in \O_{NA}^{t}$.  
We  establish that    $0 \in  \mbox{Ri} ({\mbox{Conv}}(D_{P}^{t+1}))(\o^{t})$. Then, as   $P^{t}(\O^{t}_{NA})=1,$  Proposition~\ref{singleP} shows that $NA(P)$ holds true and (iii)  follows. 
Remark \ref{EP_included_D} and \eqref{PstarARB} ($p^*_{t+1}(\cdot|\o^{t}) \ll p_{t+1}(\cdot|\o^{t})$) imply  that
${D}_{P^{*}}^{t+1}(\o^{t}) \subseteq {D}_{P}^{t+1}(\o^{t}) \subseteq  {D}^{t+1}(\o^{t})$. Thus, we see that $0 \in {\mbox{Conv}}(D_{P^*}^{t+1})(\o^{t}) \subseteq {\mbox{Conv}}(D_{P}^{t+1})(\o^{t})$. We  have that
$$\mbox{Aff} \bigl(D^{t+1}\bigr)(\o^{t}) = \mbox{Aff}\bigl(D_{P^*}^{t+1}\bigr)(\o^{t}) \subseteq \mbox{Aff} \bigl(D_P^{t+1}\bigr)(\o^{t}) \subseteq \mbox{Aff} \bigl(D^{t+1}\bigr)(\o^{t}).$$
As  $0 \in \mbox{Ri} ({\mbox{Conv}}(D_{P^*}^{t+1}))(\o^{t})$, there exists some $\varepsilon>0$ such that
$$
 B(0,\varepsilon) \bigcap \mbox{Aff}\bigl(D_{P}^{t+1}\bigr)(\o^{t}) =B(0,\varepsilon) \bigcap \mbox{Aff}\bigl(D_{P^*}^{t+1}\bigr)(\o^{t})
   \subseteq  {\mbox{Conv}}(D_{P^*}^{t+1})(\o^{t}) \subseteq {\mbox{Conv}}(D_{P}^{t+1})(\o^{t}),
$$
which concludes the proof of $0 \in  \mbox{Ri} ({\mbox{Conv}}(D_{P}^{t+1}))(\o^{t})$. 

\section{Appendix}
In this appendix, we first state and prove the results related to one-period models, which were used in the proof of Theorem ~\ref{main_result}. Then, we recall properties from \cite{carassus2024} on projective sets and functions. The next section proves technical results about the measurability of the supports, which were used in the proof of Theorem ~\ref{main_result}. Finally, we prove an extension of some results of  \cite{Carassus25} to the projective setup, which allows us to transform $\mathcal{Q}^{t+1}\mbox{-q.s.}$ inequality to $\mathcal{Q}_{t+1}(\omega^{t})\mbox{-q.s.}$ one for all $\o^t$ in a projective set of $\mathcal{Q}^t$-full-measure. This was crucial in the proof of Proposition~\ref{equivalence_local_global}

\subsection{One-period model}\label{subsection_one_period}

We introduce the one-period model and construct a probability measure for which the single-prior no-arbitrage condition holds in a quasi-sure sense. Let ${\Omega}$ be a Polish space, $\mathfrak{P}(\Omega) $ the set of all probability measures defined on ${\cal B}(\Omega) $, and $\mathcal{Q} $ a nonempty convex subset of $\mathfrak{P}(\Omega) $. 
Let $Y $ be a ${\cal B}_c(\Omega)$-measurable $\mathbb{R}^d $-valued random variable. The following sets are the pendants in the one-period case of the ones introduced in Definition~\ref{def_conditional_supports}. 
Let $p \in \mathcal{Q} $,
\begin{align*}
    E(p) &:= \bigcap \{ A \subseteq \mathbb{R}^d: \text{closed}, \, p[Y(\cdot) \in A] = 1 \}, \\
    D &:= \bigcap \{ A \subseteq \mathbb{R}^d: \text{closed}, \, q[Y(\cdot) \in A] = 1, \, \forall q \in \mathcal{Q}\}. 
\end{align*}
We now state the pendant of the no-arbitrage conditions in the one-period framework.
\begin{definition}[Quasi-sure one-period no-arbitrage condition]
The condition $NA(\mathcal{Q})$ holds true if
\[
hY(\cdot) \geq 0 \; \mathcal{Q}\text{-q.s. for some } h\in \mathbb{R}^d \implies hY(\cdot) = 0 \; \mathcal{Q}\text{-q.s.}
\]
\label{NAone}
\end{definition}

\begin{definition}[Single-prior one-period no-arbitrage condition]
Let $p \in \mathcal{Q}$. The condition $NA(p)$ holds true if
\[
hY(\cdot) \geq 0 \; p\text{-a.s. for some } h\in \mathbb{R}^d \implies hY(\cdot) = 0 \; p\text{-a.s.}
\]
\end{definition}
Note that when $\mathcal{Q}=\{p\}$, $NA(\mathcal{Q})$ and $NA(p)$ coincide. 

%
%



The following lemma recalls well-known results about supports and no-arbitrage in a one-period framework, which can, for example, be found in \cite{blanchard2020}. They are recalled for the reader's convenience.
\begin{lemma}\label{lemma_affine_hull_linear}
(i) If $0 \notin  \mathrm{Ri}(\mathrm{Conv}(D)),$ there exists some $h^* \in \mathrm{Aff}(D)$, $h^* \neq 0$ such that $h^* y \geq 0$ for all $y \in D$. \\
(ii) For any $h \in \mathbb{R}^d \setminus \{0\}$,
\begin{align}
\label{pareil1}
hY(\cdot) \geq 0 \; \mbox{$\mathcal{Q}$-q.s.} & \Longleftrightarrow h y \geq 0,\;  \forall y \in D\\
\label{pareil2}
hY(\cdot) = 0 \; \mbox{$\mathcal{Q}$-q.s.} & \Longleftrightarrow h y = 0,\;  \forall y \in D.
\end{align}
(iii) Let $p\in \mathcal{Q}$. For any $h \in \mathbb{R}^d \setminus \{0\}$,
\begin{align}
\label{pareil3}
hY(\cdot) \geq 0 \; \mbox{$p$-a.s.} & \Longleftrightarrow h y \geq 0,\;  \forall y \in E(p)\\
\label{pareil4}
hY(\cdot) =0 \; \mbox{$p$-a.s.} & \Longleftrightarrow h y = 0,\;  \forall y \in E(p).
\end{align}
(iv) Assume that $NA(\mathcal{Q})$ holds. Then, $0 \in \mathrm{Ri}(\mathrm{Conv}(D)).$ \\
(v)  Let $p\in \mathcal{Q}$. Assume that $NA(p)$ holds. Then, $0 \in \mathrm{Ri}(\mathrm{Conv}(E(p))).$
\end{lemma}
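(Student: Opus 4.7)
The five items are classical convex-analytic and measure-theoretic facts adapted to the $\mathcal{Q}$-quasi-sure framework; none requires the projective machinery of Section~\ref{projsetup}. The unifying observation is that $D$ (resp. $E(p)$) is by definition the intersection of all closed sets of full $\mathcal{Q}$-q.s. (resp. $p$-a.s.) measure, so any closed half-space of full measure automatically contains the support.

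For item (i) the plan is to separate the origin from $D$ by a vector that actually lies in $\mathrm{Aff}(D)$, splitting on whether $0 \in \mathrm{Aff}(D)$. If $0 \notin \mathrm{Aff}(D)$, I would take $h^*$ to be the orthogonal projection of $0$ onto the closed affine set $\mathrm{Aff}(D)$; then $h^* \in \mathrm{Aff}(D)$, $h^* \neq 0$, and the direction subspace $\mathrm{Aff}(D) - h^*$ is orthogonal to $h^*$, so $h^* y = |h^*|^2 > 0$ for every $y \in \mathrm{Aff}(D) \supseteq D$. If $0 \in \mathrm{Aff}(D)$, then $\mathrm{Aff}(D)$ is a linear subspace; within this subspace, $0$ is either strictly separated from $\mathrm{Conv}(D)$ (when $0 \notin \mathrm{Conv}(D)$) or lies on its relative boundary (when $0 \in \mathrm{Conv}(D) \setminus \mathrm{Ri}(\mathrm{Conv}(D))$), and in either subcase the supporting hyperplane theorem applied inside $\mathrm{Aff}(D)$ produces the desired nonzero $h^* \in \mathrm{Aff}(D)$.

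Items (ii) and (iii) are proved by the same two-line argument. The ``$\Leftarrow$'' direction follows from $q[Y \in D] = 1$ for every $q \in \mathcal{Q}$ (a standard consequence of the one-period argument used in \cite[Lemma 4.2]{bouchard2015}, already cited in Remark~\ref{EP_included_D}), respectively from $p[Y \in E(p)] = 1$. For ``$\Rightarrow$'', the set $\{y : h y \geq 0\}$ (resp. $\{y : h y = 0\}$) is closed and has full measure under every prior in $\mathcal{Q}$ (resp. under $p$) by the quasi-sure (resp. almost-sure) hypothesis, so by the intersection definition of $D$ (resp. $E(p)$) it contains $D$ (resp. $E(p)$).

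Items (iv) and (v) both go by contraposition. Assuming $0 \notin \mathrm{Ri}(\mathrm{Conv}(D))$, I would apply (i) to obtain a nonzero $h^* \in \mathrm{Aff}(D)$ with $h^* y \geq 0$ on $D$; then (ii) upgrades this to $h^* Y \geq 0$ $\mathcal{Q}$-q.s., $NA(\mathcal{Q})$ forces $h^* Y = 0$ $\mathcal{Q}$-q.s., and (ii) converts back to $h^* y = 0$ on $D$. To contradict $h^* \neq 0$: if $0 \in \mathrm{Aff}(D)$, then $h^* \in \mathrm{Aff}(D) = \mathrm{span}(D)$, so writing $h^* = \sum_i \lambda_i y_i$ with $y_i \in D$ gives $|h^*|^2 = \sum_i \lambda_i (h^* y_i) = 0$; if $0 \notin \mathrm{Aff}(D)$, the construction in (i) already produced $h^* y = |h^*|^2 > 0$ on $D$, directly incompatible with $h^* y = 0$. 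Part (v) is the same argument with $E(p)$ replacing $D$ and (iii) replacing (ii); the proof of (i) uses no property specific to $D$ (only that it is a closed subset of $\mathbb{R}^d$) and so applies verbatim to $E(p)$. The main delicate point is the case split $0 \in \mathrm{Aff}(D)$ versus $0 \notin \mathrm{Aff}(D)$ in (i), since the statement demands $h^* \in \mathrm{Aff}(D)$ rather than in the direction subspace of $\mathrm{Aff}(D)$; once (i) is carefully set up, the remaining parts flow mechanically.
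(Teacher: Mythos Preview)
Your proposal is correct and follows essentially the same route as the paper: separation for (i), the support characterisation for (ii)--(iii), and contraposition via (i) and (ii) for (iv)--(v). The paper avoids your case split in (iv) by observing directly that $h^* \in \mathrm{Aff}(D)$ together with $h^* y = 0$ for all $y \in D$ (hence for all $y \in \mathrm{Aff}(D)$) gives $|h^*|^2 = h^* \cdot h^* = 0$, regardless of whether $0 \in \mathrm{Aff}(D)$.
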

\begin{proof}
Assertion (i) is a classical exercise relying on a separation argument in $\mathbb{R}^{d}$, see  \cite[Theorems 11.1, 11.3]{rockafellar1970convex}).  
For (ii) and (iii), we only show \eqref{pareil1}. Indeed, \eqref{pareil2} will follow applying  \eqref{pareil1}  to $\pm h$. Then, \eqref{pareil3} and \eqref{pareil4} are obtained choosing $\mathcal{Q}=\{p\}.$ \\
We show the direct implication in \eqref{pareil1}. If there exists $y_0 \in D$ such that $h y_0<0$, then there exists some $\delta>0$ such that $hy<0$ for all $y \in B(y_0,\delta)$, where $B(y_0,\delta)$ is the open ball of radius $y_0$ and center $\delta$.  But by definition of $D$  there exists some $q \in \mathcal{Q}$ such that $q[Y(\cdot) \in B(y_0, \delta)]>0$, a contradiction. For the reverse implication, as in Remark \ref{EP_included_D}, 
$D$ is a closed set of $\mathbb{R}^d$
such that $q[ Y(\cdot) \in {D}] =1$ for all $q \in \mathcal{Q}$. As $Y(\cdot)$ is ${\cal B}_c(\Omega)$-measurable, $\{ Y(\cdot) \in {D}\} \in {\cal B}_c(\Omega)$ and $hY(\cdot) \geq 0$  $\mathcal{Q}$-q.s.

We prove (iv). If $0 \notin \mathrm{Ri}(\mathrm{Conv}(D)),$ (i) implies that there exists some $h^* \in \mbox{Aff}(D)$, $h^* \neq 0$ such that $h^* y \geq 0$ for all $y \in D$ or equivalently $h^* Y(\cdot) \geq 0$ $\mathcal{Q}$-q.s. using \eqref{pareil1}. As $NA(\mathcal{Q})$ holds true, $h^* Y(\cdot) = 0$ $\mathcal{Q}$-q.s. or $h^* y = 0$ for all $y \in D$ using \eqref{pareil2}. Thus, 
$$h^* \in D^\perp:=\bigl\{h\in\mathbb{R}^d: h y=0, \forall y \in D\bigr\}=(\mathrm{Aff}(D))^\perp$$ 
and also $h^* \in (\mathrm{Aff}(D))^\perp \cap \mathrm{Aff}(D)$. So, $h^*=0$ and we get a contradiction. 
Finally, (v) follows from (iv) choosing $\mathcal{Q}=\{p\}.$
\end{proof}

The following proof is partially inspired from \cite[Lemma 2.2]{bayraktar2017arbitrage} which gives the existence of some $\hat{p}  \in \mathcal{Q}$ such that $NA(\hat{p} )$  holds true and $\mbox{Aff}(E(\hat{p} ))=\mbox{Aff}(D)$. 
\begin{proposition}[Construction of $P^*$ in the one-period case]\label{prop_construction_P_star} Assume that $\mathcal{Q}$ is nonempty and convex and that the one-period $NA(\mathcal{Q})$ condition holds. Then there exists some $\hat{p}  \in Q $ such that $0 \in \mathrm{Ri}(\mathrm{Conv}(E(\hat{p} ))) $ and $\mathrm{Aff}(E(\hat{p} )) = \mathrm{Aff}(D) $.
\end{proposition}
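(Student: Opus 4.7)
The plan is to construct $\hat p$ explicitly as a uniform convex combination of finitely many members of $\mathcal{Q}$ whose supports jointly approximate enough of $D$. Since $NA(\mathcal{Q})$ holds, Lemma~\ref{lemma_affine_hull_linear}(iv) yields $0 \in \mathrm{Ri}(\mathrm{Conv}(D))$, and in particular $0 \in \mathrm{Aff}(D)$, so that $L := \mathrm{Aff}(D)$ is a linear subspace of $\mathbb{R}^d$ of some dimension $k$. I first extract a finite family $z_1, \dots, z_N \in D$ with strictly positive weights $\tau_1, \dots, \tau_N$ summing to $1$ such that $\sum_\alpha \tau_\alpha z_\alpha = 0$ and $\mathrm{Aff}(\{z_\alpha\}) = L$. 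This is achieved by taking a small regular $k$-simplex in $L$ centered at $0$ whose vertices lie in the relative neighborhood of $0$ contained in $\mathrm{Conv}(D)$, and applying Carath\'eodory's theorem to write each vertex as a convex combination of points of $D$; collecting these points gives the family, and the construction exhibits $0$ as a strict positive convex combination of the $z_\alpha$'s, so that also $0 \in \mathrm{Ri}(\mathrm{Conv}(\{z_\alpha\}))$.

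Next, I fix $\epsilon > 0$ to be chosen sufficiently small below. For each $\alpha$, since $z_\alpha \in D$, there exists $q_\alpha \in \mathcal{Q}$ with $q_\alpha[Y \in B(z_\alpha, \epsilon)] > 0$: otherwise $\mathbb{R}^d \setminus B(z_\alpha, \epsilon)$ would be a closed set of full measure for every $q \in \mathcal{Q}$ and not contain $z_\alpha$, contradicting the definition of $D$. I then define
\[
\hat p := \frac{1}{N}\sum_{\alpha=1}^N q_\alpha,
\]
which belongs to $\mathcal{Q}$ by convexity. By construction $\hat p[Y \in B(z_\alpha, \epsilon)] > 0$ for every $\alpha$, so we may pick $w_\alpha \in E(\hat p) \cap B(z_\alpha, \epsilon)$.

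The two properties required of $\hat p$ are then checked by a perturbation argument. By construction, the $z_\alpha$'s contain $k+1$ affinely independent elements spanning $L$; affine independence is an open condition, so for $\epsilon$ small enough the corresponding $w_\alpha$'s are still affinely independent. Since $E(\hat p) \subseteq D$ by Remark~\ref{EP_included_D}, these $w_\alpha$'s lie in $L$, and $k+1$ affinely independent points in a $k$-dimensional affine space span it, which gives $\mathrm{Aff}(E(\hat p)) = L = \mathrm{Aff}(D)$. To show $0 \in \mathrm{Ri}(\mathrm{Conv}(E(\hat p)))$, set $\tilde 0_\epsilon := \sum_\alpha \tau_\alpha w_\alpha$, so that $|\tilde 0_\epsilon| \leq \epsilon$. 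I look for corrections $\delta_\alpha$ satisfying $\sum_\alpha \delta_\alpha = 0$ and $\sum_\alpha \delta_\alpha w_\alpha = -\tilde 0_\epsilon$, so that $\mu_\alpha := \tau_\alpha + \delta_\alpha$ gives $\sum_\alpha \mu_\alpha = 1$ and $\sum_\alpha \mu_\alpha w_\alpha = 0$. Solvability comes from the identity $\{\sum_\alpha \delta_\alpha w_\alpha : \sum_\alpha \delta_\alpha = 0\} = L$ together with $-\tilde 0_\epsilon \in L$, while a pseudoinverse estimate yields $|\delta_\alpha| = O(\epsilon)$. For $\epsilon$ small relative to $\min_\alpha \tau_\alpha$, the $\mu_\alpha$'s are strictly positive, so $0 \in \mathrm{Ri}(\mathrm{Conv}(\{w_\alpha\})) \subseteq \mathrm{Ri}(\mathrm{Conv}(E(\hat p)))$, the inclusion following from $\mathrm{Conv}(\{w_\alpha\}) \subseteq \mathrm{Conv}(E(\hat p))$ and the equality of affine hulls already established. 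The main obstacle is coupling the two tolerances on $\epsilon$, but both are open conditions and can be enforced simultaneously.
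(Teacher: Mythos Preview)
Your argument is correct and follows a genuinely different route from the paper's.

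The paper proceeds indirectly: it introduces the subclass $\bar{\mathcal{Q}}=\{p\in\mathcal{Q}:p[hY<0]>0\text{ for all }h\in\mathrm{Aff}(D)\cap S(0,1)\}$, shows it is nonempty by a finite-subcover argument on the unit sphere of $\mathrm{Aff}(D)$, observes that $NA(p)$ holds for every $p\in\bar{\mathcal{Q}}$, and then selects $\hat p\in\bar{\mathcal{Q}}$ maximising $\dim\mathrm{Aff}(E(p))$. A dimension-counting contradiction (mixing $\hat p$ with a suitable $q^*\in\mathcal{Q}$) forces $\mathrm{Aff}(E(\hat p))=\mathrm{Aff}(D)$, and $0\in\mathrm{Ri}(\mathrm{Conv}(E(\hat p)))$ then follows from $NA(\hat p)$ via Lemma~\ref{lemma_affine_hull_linear}(v). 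Your construction is instead explicit: you fix finitely many target points $z_\alpha\in D$ witnessing $0\in\mathrm{Ri}(\mathrm{Conv}(D))$ and $\mathrm{Aff}(\{z_\alpha\})=\mathrm{Aff}(D)$, pick $q_\alpha\in\mathcal{Q}$ charging $B(z_\alpha,\epsilon)$, average, and conclude by a perturbation argument. This bypasses the maximality step entirely and produces a concrete $\hat p$. The only point that deserves a word is the uniformity in $\epsilon$ of your pseudoinverse bound: since $w_\alpha\to z_\alpha$ as $\epsilon\to0$ and the limiting linear map $\delta\mapsto\sum_\alpha\delta_\alpha z_\alpha$ is surjective onto $L$, the pseudoinverse norms stay bounded and the $O(\epsilon)$ estimate is indeed uniform. (A shorter variant: by compactness of $L\cap S(0,1)$ there is $\eta>0$ with $\min_\alpha h\cdot z_\alpha\le-\eta$ for every unit $h\in L$, whence $\min_\alpha h\cdot w_\alpha\le-\eta+\epsilon<0$ for $\epsilon<\eta$, giving $0\in\mathrm{Ri}(\mathrm{Conv}(\{w_\alpha\}))$ directly without solving a linear system.) Either way, both approaches hinge on convexity of $\mathcal{Q}$ to form the decisive finite average; yours trades the paper's abstract maximisation for elementary finite-dimensional stability.
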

\begin{proof}
 Assume that the quasi-sure one-period no-arbitrage $NA(\mathcal{Q})$ holds true and that $\mathcal{Q}$ is nonempty and convex. As  $0\in \mathrm{Aff}(D)$ (see Lemma~\ref{lemma_affine_hull_linear}), $\mathrm{Aff}(D)$ is a linear subspace of $\mathbb{R}^d$. We denote for all $q\in\mathcal{Q}$, 
 $$N(q):=\{h\in\mathbb{R}^d:hY(\cdot)=0\;q\mbox{-a.s}\} \mbox{ and } N(\mathcal{Q}):=\{h\in\mathbb{R}^d:hY(\cdot)=0\;\mathcal{Q}\mbox{-q.s}\}.$$ Then, using  Lemma~\ref{lemma_affine_hull_linear},
    \begin{eqnarray*} \label{la}
    \mathrm{Aff}(D)^\perp:=\bigl\{h\in\mathbb{R}^d: h y=0, \forall y \in \mathrm{Aff}(D)\bigr\}=D^\perp=N(\mathcal{Q}).
    \end{eqnarray*}
Let \begin{equation}\label{defqbar}
\bar{\mathcal{Q}}=\bigl\{ p\in  \mathcal{Q}:  {p}[hY(\cdot)< 0]>0, \forall h\in \mathrm{Aff}(D)\cap S(0,1)\bigr\},
    \end{equation}
with $S(0,1):=\{x\in\mathbb{R}^d: |x|=1\}$ (recall that $|x|$ is the Euclidian norm of  $x\in \mathbb{R}^d$). \\
 \noindent  {\it Step 1: $\bar{\mathcal{Q}}$ is nonempty.}\\
Let $h\in \mathrm{Aff}(D)\cap S(0,1)$.  There exists $p_h\in \mathcal{Q}$ such that $p_h[hY(\cdot)< 0]>0$. If not, then $hY(\cdot)\geq 0 \; \mathcal{Q}\mbox{-q.s.}$ and $NA(\mathcal{Q})$ implies that $hY(\cdot)= 0 \; \mathcal{Q}\mbox{-q.s.}$, which means that $h\in N(\mathcal{Q})=\mathrm{Aff}(D)^\perp$. Thus, $h\in\mathrm{Aff}(D)\cap \mathrm{Aff}(D)^\perp =\{0\}$. This is impossible because $|h|=1$. 
Furthermore,  
there exists $\epsilon_h>0$  such that 
    \begin{equation}\label{equation_proof_construction_P_star}
        p_h[h'Y(\cdot)<0]>0 \; \text{for all} \; h'\in  B(h,\epsilon_h). 
    \end{equation}
Now, using that $\mathrm{Aff}(D)\cap S(0,1)$ is compact in $\mathbb{R}^d$, one can extract a finite subcover of the open cover $\cup_{h\in \mathrm{Aff}(D)\cap S(0,1)}B(h,\epsilon_h).$ So, there exist $k\geq1$ and $h_i\in \mathrm{Aff}(D)\cap S(0,1)$ for all $i\in \{1,\ldots,k\}$ such that 
$\mathrm{Aff}(D)\cap S(0,1)\subseteq\bigcup_{i=1}^k B(h_i,\epsilon_i)$  setting $\epsilon_i=\epsilon_{h_i}.$ We associate to each $h_i$,  the probability $p_{h_i}\in \mathcal{Q}$ constructed above (i.e. satisfying \eqref{equation_proof_construction_P_star}) and we set 
$$\bar{p}:=\frac{1}{k}\sum_{i=1}^k  p_{h_i}.$$ Then, $\bar{p}\in \mathcal{Q}$ by convexity. 
    Furthermore, for all $h\in \mathrm{Aff}(D)\cap S(0,1)$, we have that $h\in B(h_j,\epsilon_j)$ for a certain $j\in \{1,\ldots,k\}$, 
 we can apply (\ref{equation_proof_construction_P_star}) for the probability $p_{h_j}$ and we get that 
    \begin{eqnarray*}\label{construction_bar_p}
        \bar{p}[hY(\cdot)<0]=\frac{1}{k}\sum_{i=1}^k  p_{h_i}[hY(\cdot)<0]\geq \frac{1}{k}p_{h_j}[hY(\cdot)<0]>0.
    \end{eqnarray*} 
So, we have proved that $\bar{\mathcal{Q}} \neq \emptyset$. \\
 \noindent  {\it Step 2: $\bar{\mathcal{Q}} \subseteq \{p \in \mathcal{Q}: NA(p) \mbox{ holds} \}. $}\\
Let $p \in \bar{\mathcal{Q}}$. Assume that $NA(p)$ does not hold. Then, there exists $\ell \in \mathbb{R}^d$ such that ${p}[\ell Y(\cdot)\geq0]=1$ but ${p} [\ell Y(\cdot)=0] \neq 1$, meaning that $\ell \notin N(\mathcal{Q})$. Then, the orthogonal projection of $\ell$ on $\mathrm{Aff}(D)$ is a nonzero vector (or else $\ell \in \mathrm{Aff}(D)^\perp = N(\mathcal{Q})$) and we can write $\ell=\ell' +\ell^\perp$ where $\ell' \in \mathrm{Aff}(D)$ and $\ell^\perp\in N(\mathcal{Q})$ are the respective orthogonal projections of $\ell$ on $\mathrm{Aff}(D)$ and on $\mathrm{Aff}(D)^\perp =N(\mathcal{Q})$. Then, ${p}[\ell Y(\cdot)<0]={p}[\ell \rq{}Y(\cdot)<0]=0$. This contradicts the fact that ${p}[\frac{\ell'}{|\ell'|}Y(\cdot)< 0]>0,$ see \eqref{defqbar}. Therefore, $NA({p})$ holds and the inclusion as claimed.\\
 \noindent  {\it Step 3: Construction of $\hat{p} \in \bar{\mathcal{Q}}$ such that $\mathrm{Aff}(E(\hat p))=\mathrm{Aff}(D)$. }\\
For all $q\in \bar{\mathcal{Q}}$,  as $NA(q)$ holds true,  Lemma~\ref{lemma_affine_hull_linear} shows that $ 0 \in \mathrm{Aff}(E(q)),$ which is thus a linear subspace of $\mathbb{R}^d$.
We also have the inclusion $E(q)\subseteq D$ (see Remark~\ref{EP_included_D}). So, $\mathrm{Aff}(E(q))\subseteq \mathrm{Aff}(D)$.  
Now, we set $\delta:$ $ \bar{\mathcal{Q}} \ni q \mapsto dim(\mathrm{Aff(E(q))})$.  As $\delta(\bar{\mathcal{Q}})$ is a nonempty subset of $\{0,\dots,d\}$, $m:=\max_{\bar{\mathcal{Q}}} \delta$ is attein by  some $\hat p\in\bar{\mathcal{Q}}$ and we have that 
$$\delta(\hat p)=m=\max_{\bar{\mathcal{Q}}}\delta=dim\Bigl(\mathrm{Aff}\bigl(E(\hat p)\bigr)\Bigr) \leq dim \bigl(\mathrm{Aff}(D)\bigr).$$
Using Lemma~\ref{lemma_affine_hull_linear},
    \begin{eqnarray*}\label{lap}
    \mathrm{Aff}\bigl(E(\hat p)\bigr)^\perp:=\bigl\{h\in\mathbb{R}^d: h y=0, \forall y \in \mathrm{Aff}\bigl(E(\hat p)\bigr)\bigr\}=\bigl(E(\hat p)\bigr)^\perp=N(\hat p).
    \end{eqnarray*}
Now, we prove that  $\mathrm{Aff}(E(\hat p))=\mathrm{Aff}(D)$. Else,  suppose that $\mathrm{Aff}(E(\hat p))\subsetneq\mathrm{Aff}(D).$ Let  $d_D:=\dim \mathrm{Aff}(D).$  
  First, we show that 
  $$[\mathrm{Aff}(D)\setminus\mathrm{Aff}(E(\hat{p}))]\bigcap N(\hat{p}) \neq \emptyset.$$ 
Let $B:=(b_1,\dots,b_{d})$ be an orthonormal basis  of $\mathbb{R}^d$, adapted to the decomposition of $\mathbb{R}^d=\mathrm{Aff}(E(\hat{p}))\oplus N(\hat{p})$, which $m$ first vectors make a basis of $\mathrm{Aff}(E(\hat{p}))$, and which $d_D$ first vectors make a basis of $\mathrm{Aff}(D)$. Then, $m<d_D$. We consider $b_{m+1}$. We have that $b_{m+1}\in[\mathrm{Aff}(D)\setminus\mathrm{Aff}(E(\hat{p}))]\cap N(\hat{p})$. Indeed, remember that $b_{m+1}\in\mathrm{Aff}(D)$. Moreover, let $\ell:=\sum_{i=1}^m \mu_ib_i \in\mathrm{Aff}(E(\hat{p}))$, where $\mu_1,\dots,\mu_m\in\mathbb{R}$, then $b_{m+1}\ell=\sum_{i=1}^m \mu_ib_{m+1}b_i=0.$ Thus,  $b_{m+1}\in\mathrm{Aff}(E(\hat{p}))^\perp = N(\hat{p})$. Finally, $b_{m+1}\notin \mathrm{Aff}(E(\hat{p}))$, else  
$b_{m+1}\in \mathrm{Aff}(E(\hat{p}))\cap\mathrm{Aff}(E(\hat{p}))^\perp = \{0\}$, but $b_{m+1}\neq 0$. We set $h^*:=b_{m+1}\in[\mathrm{Aff}(D)\setminus\mathrm{Aff}(E(\hat{p}))]\cap N(\hat{p}).$ Note again that $h^*\neq 0$. 

Now, as $h^*\in \mathrm{Aff}(D)$ and  $\mathrm{Aff}(D)\cap\mathrm{Aff}(D)^\perp=\{0\}$, we have that $h^*\notin \mathrm{Aff}(D)^\perp=N(\mathcal{Q})$, which means that there exists $q^*\in\mathcal{Q}$ such that $q^*[h^*Y(\cdot)\neq 0]>0$.  
We set $p'=\frac{\hat{p}+q^*}{2}$. By convexity $p'\in \mathcal{Q}$. Remark that $N(p')\subseteq N(\hat{p})$. Indeed, let $\ell \in N(p')$, then $\frac{\hat{p}+q^*}{2}[\ell Y(\cdot)=0]=1$, and necessarily $\hat{p}[\ell Y(\cdot)=0]=1$ as well, thus $\ell \in N(\hat{p})$. Moreover, $N(p')\subsetneq N(\hat{p})$. Indeed,  $h^*\in N(\hat{p})$ and as
$$p'[h^*Y(\cdot)\neq0]=\frac{\hat{p}+q^*}{2}[h^*Y(\cdot)\neq0]\geq \frac{q^*}{2}[h^*Y(\cdot)\neq0]>0$$ we have that $h^*\notin N(p').$  
So,  $N(p')\subsetneq N(\hat{p})$ and $\mathrm{Aff}(E(\hat p))\subsetneq \mathrm{Aff}(E(p')).$ Thus, $\delta (\hat p) <\delta(p')$, a contradiction to the maximality of $\hat{p}$ for $\delta$. So, $\mathrm{Aff}(E(\hat p))=\mathrm{Aff}(D)$. \\
 \noindent  {\it Step 4: conclusion.}\\
In Step 3, we have construct $\hat{p} \in \bar{\mathcal{Q}}$, such that $\mathrm{Aff}(E(\hat p))=\mathrm{Aff}(D)$.  
As $\hat{p} \in \bar{\mathcal{Q}},$ by Step 2, $NA(\hat{p})$ holds true and Lemma \ref{lemma_affine_hull_linear} implies that $0\in \mathrm{Ri}(\mathrm{Conv}(E(\hat{p}))$, which concludes the proof. 
\end{proof}

\subsection{Properties of projective sets and functions}
We present key properties of projective sets and functions used in our proofs.
\begin{proposition}[Properties of projective sets and  functions]\label{prop_properties_projective_functions}

Let $X, Y$ and $Z $ be Polish spaces.
\begin{itemize}
        \item[(i)] The sequence $(\Delta_n^1(X))_{n \geq 1}$ is a nondecreasing sequence of sigma-algebras.
    \item[(ii)] The class $\mathbf{P}(X)$ is closed under complements, finite unions and finite intersections. 
    If $A\in\mathbf{P}(X\times Y)$, then $\mathrm{proj}_X(A)\in\mathbf{P}(X)$, while  if $A\in\Sigma_n^1 (X\times Y)$ for some $n\geq 1$, then $\mathrm{proj}_X(A)\in\Sigma_n^1 (X).$
    \item[(iii)] Let $n\geq 1$. We have that $\Delta_n^1(X)\times\Delta_n^1(Y)\subseteq\Delta_n^1(X\times Y),$  $\mathbf{P}(X)\times\mathbf{P}(Y)\subseteq\mathbf{P}(X\times Y)$ and 
    \begin{eqnarray}
\Sigma_n^1(X)  \subseteq \Delta_{n+1}^1(X). \label{eq(v)_pj}
\end{eqnarray}
    \item[(iv)] Let $f: X\to \mathbb{R}^p$ and $g : X \to \mathbb{R}^p$  for some $p\geq 1$. 
    If $f$ and $g$ are projective functions, then $-f$, $fg$ and $f+g $ are also projective.
        \item[(v)]  For all $n\geq 0$, let $f, f_n, g : X \to \mathbb{R}\cup \{-\infty,+\infty\}$.
Let $p \geq 1$. Assume that $f$, $f_n$ and $g$ are $\Delta_p^1(X)$-measurable for all $n\geq 0$. Then,  $f+g$, $-f$, $\min(f,g)$, $\max(f,g)$, $\inf_{n\geq 0} f_n$, $\sup_{n\geq 0} f_n$ are $\Delta_p^1(X)$-measurable. Now, if $f$  and $g$ are projective functions, then $f+g$, $-f$, $\min(f,g)$ and $\max(f,g)$ are also projective.
     \item[(vi) ]       Let $g: D \to Y$ and $f : E \to Z$ where $D\subseteq X$ and $g(D)\subseteq E \subseteq Y$.   Assume that $f$ is $\Delta_p^1(Y)$-measurable and that $g$ is $\Delta_q^1(X)$-measurable for some $p, q\geq 1$. Then, $f\circ g$ is $\Delta_{p+q}^1(X)$-measurable. Assume that $f$ and $g$ are projective functions. Then, $f\circ g$ is also projective.
    \item[(vii) ] Let $h : X\times Y \to Z.$ Assume that $h$ is a projective function. Then $h(x,\cdot) : y \mapsto h(x,y)$ is projective for all $x\in X$, while $h(\cdot,y) : x \mapsto h(x,y)$ is projective for all $y\in Y$.
 \item[(viii)] Let $f : X \to Z$ and $h : X  \times Y \to  Z$ be defined by $h(x,y) := f(x)$ for all $(x,y)\in X\times Y$. If $f$ is $\Delta_p^1(X)$-measurable for some $p\geq 1$, then $h$ is $\Delta_p^1(X \times Y)$-measurable.  If $f$ is a projective function, then $h$ is also projective.
\end{itemize}
\end{proposition}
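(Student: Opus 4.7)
This proposition collects standard descriptive set theory facts about the projective hierarchy, and my plan is essentially to verify each item in turn by reducing to base hierarchy properties (closure of $\Sigma^1_n$ under countable unions and continuous preimages, closure of $\Pi^1_n$ under countable intersections and continuous preimages, and behavior under projection), together with the Borel isomorphism theorem which lets one replace $Y \times \mathbb{N}^{\mathbb{N}}$ by $\mathbb{N}^{\mathbb{N}}$ when convenient. Most of these items are already proved in \cite{carassus2024}, and the rest follow by routine bookkeeping of projective levels.

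For (i), I would check that each $\Delta^1_n(X)$ is a sigma-algebra: it is closed under complement by symmetry of the definition, and under countable unions because $\Sigma^1_n$ is closed under countable unions while $\Pi^1_n$ is closed under countable intersections, so their common part inherits both (dualizing). Monotonicity $\Delta^1_n \subseteq \Delta^1_{n+1}$ follows from (iii), so I would prove (iii) next: the inclusion $\Sigma^1_n(X) \subseteq \Delta^1_{n+1}(X)$ reduces to showing $\Sigma^1_n \subseteq \Sigma^1_{n+1}$ (trivial, by writing a $\Pi^1_{n-1}$ witness as a $\Pi^1_n$ witness) and $\Sigma^1_n \subseteq \Pi^1_{n+1}$ (by taking complement and re-applying the previous inclusion). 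The product inclusion $\Delta^1_n(X) \times \Delta^1_n(Y) \subseteq \Delta^1_n(X \times Y)$ is handled by writing $A \times B = \pi_X^{-1}(A) \cap \pi_Y^{-1}(B)$ and invoking closure under continuous preimages. Statement (ii) then follows: finite Boolean operations on $\mathbf{P}(X)$ reduce to the sigma-algebra structure of a common $\Delta^1_{\max(n,m)}$ by (i); and for projection, writing $A \in \Sigma^1_n(X \times Y)$ as $A = \mathrm{proj}_{X \times Y}(C)$ for a Borel $C \subseteq X \times Y \times \mathbb{N}^{\mathbb{N}}$ gives $\mathrm{proj}_X(A) = \mathrm{proj}_X(C')$ where $C'$ is Borel in $X \times (Y \times \mathbb{N}^{\mathbb{N}})$, and a Borel isomorphism $Y \times \mathbb{N}^{\mathbb{N}} \to \mathbb{N}^{\mathbb{N}}$ identifies this as $\Sigma^1_n(X)$.

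The function-level items (iv)--(viii) follow from the preceding. (iv) and (v) use that $\Delta^1_p$ is a sigma-algebra, so arithmetic operations and countable suprema/infima preserve $\Delta^1_p$-measurability; passing to the projective class is handled by taking the maximum of the indices. (vii) and (viii) reduce to (vi): the section $y \mapsto h(x_0, y)$ is the composition $h \circ \iota_{x_0}$ with the continuous (hence $\Delta^1_1$-measurable) embedding $\iota_{x_0}(y) = (x_0, y)$, and $h(x,y) = f(x)$ is $f \circ \pi_X$ with $\pi_X$ continuous. The composition statement (vi) is the only step requiring any real work: I would prove by induction on $p$ that if $f : E \to Z$ is $\Delta^1_p(Y)$-measurable and $g : D \to Y$ is $\Delta^1_q(X)$-measurable, then for every Borel $B \subseteq Z$ the set $g^{-1}(f^{-1}(B))$ lies in $\Delta^1_{p+q}(X)$. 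The base case $p=1$ uses that $f^{-1}(B) \in \Sigma^1_1(Y)$ is a projection of a Borel set and hence $g^{-1}(f^{-1}(B))$ is a projection of a set in $\Delta^1_q(X \times \mathbb{N}^{\mathbb{N}})$, giving $\Sigma^1_q(X) \subseteq \Delta^1_{q+1}(X)$; the inductive step tracks one extra projection/complement at each level.

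The principal technical obstacle is the level-accounting in (vi): one must carefully unwind the projection-and-complement definition of higher levels $\Sigma^1_p, \Pi^1_p$ and verify that taking the preimage under $g$ interacts correctly with each step, incrementing the index by exactly $q$. Everything else is either a standard sigma-algebra argument or a direct corollary of (vi). Since this is essentially the content of Proposition~7 (and surrounding results) in \cite{carassus2024}, the cleanest presentation is to cite that reference for the detailed hierarchy-tracking and simply record here which item of the proposition matches which item there.
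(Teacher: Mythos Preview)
Your proposal is correct and takes essentially the same approach as the paper: the paper's proof is entirely by citation to \cite{carassus2024} (specifically Propositions~1, 3, 4, Lemmata~2, 4, and Corollary~2 there), and you likewise conclude that the cleanest route is to cite that reference, while additionally sketching the underlying hierarchy arguments.

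One small imprecision in your sketch: reducing (viii) to (vi) as written only yields $\Delta^1_{p+1}$-measurability of $h$ (since $\pi_X$ is $\Delta^1_1$-measurable and (vi) gives level $p+q$), not the claimed $\Delta^1_p$. The sharp bound in (viii) comes directly from the fact that continuous preimages preserve each class $\Sigma^1_n$ and $\Pi^1_n$ exactly, so $h^{-1}(B) = \pi_X^{-1}(f^{-1}(B))$ stays in $\Delta^1_p(X\times Y)$ without any level increase. This does not affect the projective-measurability conclusion, but is worth noting since the paper uses the exact level in (viii) at several places.
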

\begin{proof}
    Items (i) to (iii) are proved by applying \cite[Proposition 1]{carassus2024}. Note that for the projection properties in (ii), we choose the direct image with the Borel function $f:=\mathrm{proj}_X$ in \cite[Proposition 1 (i) and (vi)]{carassus2024}. Then, (iv) is proved in \cite[Lemma 4]{carassus2024}, (v)  in \cite[Proposition 4]{carassus2024}, 
  (vi)  in \cite[Proposition 3]{carassus2024}, (vii)  in \cite[Corollary 2]{carassus2024} and, finally, (viii) follows from  \cite[Lemma 2]{carassus2024}. 
\end{proof}
\begin{proposition}[Integral of projective functions] 
Assume the (PD) axiom. Let $X$ and $Y$ be Polish spaces. Let $f : X \times Y \to \mathbb{R}\cup\{-\infty,+\infty\}$ and let $q$ be a stochastic kernel on $Y$ given $X$. Let $\lambda : X  \to \mathbb{R}\cup\{-\infty,+\infty\}$ be defined by $$\lambda(x):=\int_{-} f(x,y)q(dy|x).$$
\noindent (i) Assume that $X \ni x\mapsto q(\cdot|x) \in \frak{P}(Y)$ is $\Delta_r^1(X)$-measurable for some $r\geq 1$ and that $f$ is $\Delta_p(X\times Y)$-measurable for some $p\geq 1$. Then, $\lambda$ is $\Delta_{p+r+2}^1(X)$-measurable.\\
\noindent (ii) Assume that $X \ni x\mapsto q(\cdot|x) \in \frak{P}(Y)$ is projectively measurable and that $f$ is a projective function. Then, $\lambda$ is also a projective function.
\label{prj_cvt_pj}
\end{proposition}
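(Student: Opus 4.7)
The plan is to reduce the problem to indicator functions by the standard analysis machinery, and then to address the indicator case via the recursive structure of the projective hierarchy. First, I would handle the sign of $f$ using the $\int_-$ convention: since $f^\pm := \max(\pm f, 0)$ are $\Delta^1_p(X\times Y)$-measurable by Proposition~\ref{prop_properties_projective_functions} (v), it suffices to prove both $\lambda^\pm(x) := \int f^\pm(x,y)\,q(dy|x)$ are of the stated class, because $\lambda = \lambda^+ - \lambda^-$ on $\{\lambda^+ < +\infty\} \cup \{\lambda^- < +\infty\}$ and $\lambda = -\infty$ otherwise (and both sets lie in $\Delta^1_{p+r+2}(X)$ once $\lambda^\pm$ do, by Proposition~\ref{prop_properties_projective_functions} (v)). This reduces the problem to nonnegative $f$.

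Next, for $f \geq 0$, I would use the dyadic approximation $f_n := \sum_{k=0}^{n 2^n -1} (k/2^n) \mathbf{1}_{A_{k,n}} + n\,\mathbf{1}_{B_n}$, where $A_{k,n} := f^{-1}([k/2^n,(k+1)/2^n))$ and $B_n := f^{-1}([n,+\infty])$ all lie in $\Delta^1_p(X\times Y)$ because $f$ is $\Delta^1_p$-measurable. By the monotone convergence theorem applied pointwise in $x$ to the probability $q(\cdot|x)$, we have $\lambda_n(x) := \int f_n(x,y)\,q(dy|x) \uparrow \lambda(x)$. Invoking the countable-supremum closure of $\Delta^1_{p+r+2}(X)$-measurability (Proposition~\ref{prop_properties_projective_functions} (v)) and the finite-sum closure (also (v)), it remains to establish the key indicator case: for $A \in \Delta^1_p(X\times Y)$, the map $\lambda_A(x) := q(A^x \mid x)$, with $A^x := \{y \in Y : (x,y) \in A\}$, is $\Delta^1_{p+r+2}(X)$-measurable. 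When $p = 1$, this is the classical Borel case: $(x,\mu) \mapsto \mu(A^x)$ is Borel on $X \times \mathfrak{P}(Y)$ by Fubini, so $\lambda_A$ inherits $\Delta^1_r(X)$-measurability by composition with $x \mapsto q(\cdot \mid x)$ (Proposition~\ref{prop_properties_projective_functions} (vi)), and $\Delta^1_r \subseteq \Delta^1_{r+3} = \Delta^1_{p+r+2}$ by (i) and (iii) of Proposition~\ref{prop_properties_projective_functions}.

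The main obstacle is extending this to $p \geq 2$, i.e.\ controlling the projective rank of $(x,\mu) \mapsto \mu(A^x)$ for a projective (non-Borel) set $A$. The approach is induction on $p$, using the recursive definition of $\Sigma^1_p$ as projection of $\Pi^1_{p-1}$: writing $A = \mathrm{proj}_{X\times Y}(C)$ with $C \in \Pi^1_{p-1}(X\times Y\times \mathbb{N}^{\mathbb{N}})$, one expresses the $\mu$-outer measure of $A^x$ through approximations by $C$-sections, invoking the (PD) axiom (Proposition~\ref{prop_conseq_PD_axiom} (i)) to guarantee universal measurability at every level, and applying Proposition~\ref{prop_properties_projective_functions} (ii) to track how one extra projection and one complement bump the rank by at most two. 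The result is that $(x,\mu) \mapsto \mu(A^x)$ becomes $\Delta^1_{p+2}(X\times \mathfrak{P}(Y))$-measurable, which composed with the $\Delta^1_r$-measurable $x \mapsto q(\cdot \mid x)$ gives $\Delta^1_{p+r+2}(X)$-measurability via Proposition~\ref{prop_properties_projective_functions} (vi). Item (ii) then follows at once: a projective $f$ lies in some $\Delta^1_p$, a projective kernel $q$ in some $\Delta^1_r$, so $\lambda \in \Delta^1_{p+r+2} \subseteq \mathbf{P}(X)$ by Definition~\ref{def_projective_sets}.
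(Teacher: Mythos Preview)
The paper does not actually prove this proposition: it simply cites \cite[Proposition~10]{carassus2024}. Your reduction to nonnegative $f$ via $f^\pm$ and then to indicators via dyadic approximation is correct and standard, and your identification of the key target---that $(x,\mu)\mapsto \mu(A^x)$ should be $\Delta^1_{p+2}(X\times\mathfrak{P}(Y))$-measurable for $A\in\Delta^1_p(X\times Y)$, so that composition with the $\Delta^1_r$-measurable kernel via Proposition~\ref{prop_properties_projective_functions}~(vi) yields the stated bound---is exactly right.

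The genuine gap is in the inductive step itself. Writing $A=\mathrm{proj}_{X\times Y}(C)$ with $C\in\Pi^1_{p-1}$ gives $A^x=\bigcup_{z\in\mathbb{N}^{\mathbb{N}}} C^{x,z}$, an \emph{uncountable} union, so $\mu(A^x)$ is not recoverable from the values $\mu(C^{x,z})$ by any countable operation; your phrase ``approximations by $C$-sections'' does not name a mechanism that works. Universal measurability (Proposition~\ref{prop_conseq_PD_axiom}~(i)) guarantees that $\mu(A^x)$ is well-defined but says nothing about the projective rank of $(x,\mu)\mapsto\mu(A^x)$. What is actually needed---and what the cited reference supplies---is a representation such as $\mu(A^x)=\sup\{\nu(C^x):\nu\in\mathfrak{P}(Y\times\mathbb{N}^{\mathbb{N}}),\ \mathrm{marg}_Y\nu=\mu\}$, where the inequality $\leq$ is easy but equality requires a measurable uniformizing selection of $C$ (available under (PD) via Proposition~\ref{prop_conseq_PD_axiom}~(ii)); one then expresses $\{\mu(A^x)>c\}$ as a projection over $\nu$ of a set controlled by the inductive hypothesis, and tracks the rank through one projection and one complement. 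Without this (or an equivalent capacitability-type argument), the induction does not go through, and ``bump the rank by at most two'' remains an assertion rather than a proof.
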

\begin{proof}
    This is exactly  \cite[Proposition 10]{carassus2024}. 
\end{proof}
\subsection{Projective measurability of portfolio values and of the supports.}

We now prove the measurability of the portfolio values and of the supports. 

\begin{lemma}[Projective measurability of portfolio values]\label{lemma_portfolio_measurability}
    Assume Assumption~\ref{assumption_prices}. For all $t\in\{1,\dots,T\}$, $x\in \mathbb{R}$ and $\phi \in  \Phi$,  $ \Omega^{t} \ni \omega^t \mapsto V_t^{x,\phi}(\omega^t)$ is $\mathbf{P}(\Omega^t)$-measurable, and for all $\omega^{t-1}\in \Omega^{t-1}$, $ \Omega_{t} \ni \omega_{t}  \mapsto V_{t}^{x,\phi}(\omega^{t-1},\omega_{t})$ is $\mathbf{P}(\Omega_{t})$-measurable. 
\end{lemma}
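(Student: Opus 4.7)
The plan is to write $V_t^{x,\phi} = x + \sum_{s=1}^t \phi_s \Delta S_s$ and show that each summand is projectively measurable as a function on $\Omega^t$, then conclude by the closure properties of $\mathbf{P}(\Omega^t)$ under finite sums and products recalled in Proposition~\ref{prop_properties_projective_functions}. Both claims will then follow, the second one via the stability of projective measurability under sectioning.

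First, I would fix $s \in \{1, \dots, t\}$ and observe that, by hypothesis, $\phi_s : \Omega^{s-1} \to \mathbb{R}^d$ and $S_s : \Omega^s \to \mathbb{R}^d$ are $\mathbf{P}(\Omega^{s-1})$-measurable and $\mathbf{P}(\Omega^s)$-measurable respectively (Assumption~\ref{assumption_prices} and Definition~\ref{def_strategy}). Writing the canonical projection $\pi_s^t : \Omega^t \to \Omega^s$, which is continuous and therefore Borel, and applying Proposition~\ref{prop_properties_projective_functions}~(viii) coordinate by coordinate, the functions $\widetilde{\phi}_s := \phi_s \circ \pi_{s-1}^t$ and $\widetilde{S}_s := S_s \circ \pi_s^t$ are $\mathbf{P}(\Omega^t)$-measurable as $\mathbb{R}^d$-valued maps on $\Omega^t$.

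Next, $\widetilde{\Delta S_s} := \widetilde{S}_s - \widetilde{S}_{s-1}$ is $\mathbf{P}(\Omega^t)$-measurable by Proposition~\ref{prop_properties_projective_functions}~(iv) applied componentwise. The scalar product
\[
\widetilde{\phi}_s \,\widetilde{\Delta S_s} = \sum_{i=1}^{d} \widetilde{\phi}_s^{\,i}\, \widetilde{\Delta S_s}^{\,i}
\]
is then $\mathbf{P}(\Omega^t)$-measurable, since each factor is (coordinate projection is Borel, hence projective, and composition is handled by (vi) or directly by (iv) which ensures closure under finite products and sums). Taking the finite sum over $s \in \{1, \dots, t\}$ and adding the constant $x$, one more use of (iv) yields the $\mathbf{P}(\Omega^t)$-measurability of $V_t^{x,\phi}$.

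For the sectional statement, viewing $\Omega^t = \Omega^{t-1} \times \Omega_t$ and applying Proposition~\ref{prop_properties_projective_functions}~(vii) to the projective function $V_t^{x,\phi}$ just constructed, I obtain that $\omega_t \mapsto V_t^{x,\phi}(\omega^{t-1}, \omega_t)$ is $\mathbf{P}(\Omega_t)$-measurable for every fixed $\omega^{t-1} \in \Omega^{t-1}$, which completes the argument. There is no genuine obstacle here: the proof is a direct bookkeeping exercise on the projective class, and the only point to keep in mind is that when lifting a function from some factor $\Omega^{s-1}$ or $\Omega^s$ to the full product $\Omega^t$, one must use Proposition~\ref{prop_properties_projective_functions}~(viii) (or equivalently (vi) with Borel projections) to preserve projective measurability before combining the lifted functions with the arithmetic stability properties.
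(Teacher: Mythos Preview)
Your proof is correct and follows essentially the same approach as the paper: lift $\phi_s$ and $S_s$ to $\Omega^t$ via Proposition~\ref{prop_properties_projective_functions}~(viii), combine them using the arithmetic stability in~(iv), and conclude the sectional claim from~(vii). The only cosmetic difference is that you spell out the coordinate projections and the scalar product componentwise, whereas the paper invokes~(iv) directly on the $\mathbb{R}^d$-valued functions.
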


\begin{proof}
Let $t \in \{1, \dots, T\} $, $V_t^{x, \phi} = x + \sum_{s=1}^{t} \phi_s\Delta S_s$. Let $ s\in \{1,\ldots,t\}$. We have that $\phi_{s} $ is $\mathbf{P}(\Omega^{s-1}) $-measurable by definition of $\Phi$, and 
also that $ \Omega^t \ni \omega^t \mapsto \phi_{s}(\omega^{s-1})$ is $\mathbf{P} (\Omega^t) $-measurable by Proposition~\ref{prop_properties_projective_functions} (viii). 
By Assumption~\ref{assumption_prices},  $S_s $ is $\mathbf{P} (\Omega^s) $-measurable, Proposition~\ref{prop_properties_projective_functions} (viii) and (iv)  shows that  $\Omega^t \ni \omega^t \mapsto \phi_s(\omega^{s-1}) (S_{s}(\omega^{s}) -S_{s-1}(\omega^{s-1}))$ is $\mathbf{P}(\Omega^t) $-measurable, and then that $ \Omega^t \ni \omega^t \mapsto V_t^{x,\phi}(\omega^t)$ is $\mathbf{P}(\Omega^t)$-measurable.
Now, Proposition~\ref{prop_properties_projective_functions} (vii) shows that $\Omega_{t} \ni \omega_{t}  \mapsto V_{t}^{x,\phi}(\omega^{t-1},\omega_{t})$ is $\mathbf{P}(\Omega_{t})$-measurable for all $\omega^{t-1}\in \Omega^{t-1}$.
\end{proof}

The following proposition generalizes \cite[Lemma 2.6]{blanchard2020}   using similar ideas as in \cite[Proposition 12]{CF24}. 
\begin{proposition}[Projective measurability of the supports] \label{lemma_measurability_sets}
Assume the (PD) axiom and let \textit{Assumptions}~\ref{assumption_prices} \textit{and} \ref{assumption:graph_Q} hold true. Let $0 \leq t \leq T-1 $ be fixed.  
\begin{itemize}
\item[(i)]  The random sets $E^{t+1},  \ \overline{\operatorname{Conv}}(E^{t+1}), \ \operatorname{Aff}(E^{t+1}),  \ \operatorname{Ri} (\overline{\operatorname{Conv}}(E^{t+1}) ) $ are nonempty, closed-valued and $\Delta^1_n (\Omega^t \times \mathfrak{P}(\Omega_{t+1})) $-measurable for some $n \geq 1$, and thus also $ \mathbf{P}(\Omega^t \times \mathfrak{P}(\Omega_{t+1})$-measurable.     
\item[(ii)] Let $P \in \mathcal{Q}^T.$ The random sets $D_P^{t+1},  \ \overline{\operatorname{Conv}}(D^{t+1}_P),  \ \operatorname{Aff}(D^{t+1}_P),  \ \operatorname{Ri} (\overline{\operatorname{Conv}}(D^{t+1}_P) )  $ are nonempty, closed-valued and $\Delta^1_{m} (\Omega^t) $-measurable for some $m \geq n+1$,  and thus   also 
   $\mathbf{P}(\Omega^t) $-measurable. 
\item[(iii)] The random sets $D^{t+1},  \ \overline{\operatorname{Conv}}(D^{t+1}),  \ \operatorname{Aff}(D^{t+1}),  \ \operatorname{Ri} (\overline{\operatorname{Conv}}(D^{t+1}) ) $ are nonempty, closed-valued and $\Delta^1_q (\Omega^t) $-measurable for some $q \geq 1 $, and thus   also
   $\mathbf{P}(\Omega^t) $-measurable.

\end{itemize}
\end{proposition}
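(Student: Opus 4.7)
The plan is to handle the three items in order, obtaining (ii) and (iii) from (i) by composition and projection arguments. The unifying idea is to characterize each multifunction via its preimage on open sets of $\mathbb{R}^d$ and then apply Proposition~\ref{prj_cvt_pj} together with the projective stability properties of Proposition~\ref{prop_properties_projective_functions}.

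For (i), I would first handle $E^{t+1}$ itself. The classical characterization of a topological support in $\mathbb{R}^d$ reads, for every open $O\subseteq\mathbb{R}^d$,
\[
(E^{t+1})^{-1}(O)=\bigl\{(\omega^t,p)\in\Omega^t\times\mathfrak{P}(\Omega_{t+1}):\ p\bigl[\Delta S_{t+1}(\omega^t,\cdot)\in O\bigr]>0\bigr\}.
\]
Setting $\lambda_O(\omega^t,p):=\int_-\mathbf{1}_{\{\Delta S_{t+1}(\omega^t,\omega_{t+1})\in O\}}\,p(d\omega_{t+1})$, Assumption~\ref{assumption_prices} and Remark~\ref{n_for_projection} show that $\mathbf{1}_{\{\Delta S_{t+1}\in O\}}$ is $\Delta^1_k$-measurable for some $k\geq 1$ independent of $O$. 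Applying Proposition~\ref{prj_cvt_pj}(i) with the Borel stochastic kernel $((\omega^t,p),d\omega_{t+1})\mapsto p(d\omega_{t+1})$ then yields that $\lambda_O$ is $\Delta^1_{k+3}$-measurable, so $E^{t+1}$ is $\Delta^1_n(\Omega^t\times\mathfrak{P}(\Omega_{t+1}))$-measurable with $n:=k+3$; nonemptiness and closedness are immediate from the definition of the support of a probability measure. The derived multifunctions $\overline{\operatorname{Conv}}(E^{t+1})$, $\operatorname{Aff}(E^{t+1})$ and $\operatorname{Ri}(\overline{\operatorname{Conv}}(E^{t+1}))$ are then handled by invoking the classical stability results for measurable closed-valued multifunctions inside the sigma-algebra $\Delta^1_n$, as in \cite[Theorems 14.11 and 14.12]{RockafellarWets1998}.

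For (ii), Remark~\ref{EP_included_D} gives $D^{t+1}_P(\omega^t)=E^{t+1}(\omega^t,q_{t+1}(\cdot\mid\omega^t))$. Since $q_{t+1}\in SK_{t+1}$, there exists some $r\geq 1$ such that $\Omega^t\ni\omega^t\mapsto(\omega^t,q_{t+1}(\cdot\mid\omega^t))$ is $\Delta^1_r$-measurable (Proposition~\ref{prop_properties_projective_functions}(viii) for the first coordinate combined with the projective measurability of $q_{t+1}$). Composing with the multifunction $E^{t+1}$ from (i) yields, after bounded bookkeeping, that $D^{t+1}_P$ is $\Delta^1_m$-measurable for some $m\geq n+1$, and the remaining three sets follow again from the multifunction stability results. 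For (iii), I would use that $D^{t+1}(\omega^t)$ is the closure of $\bigcup_{p\in\mathcal{Q}_{t+1}(\omega^t)}E^{t+1}(\omega^t,p)$ (see \cite[Lemma~4.2]{bouchard2015}), hence
\[
(D^{t+1})^{-1}(O)=\operatorname{proj}_{\Omega^t}\bigl[\operatorname{Graph}(\mathcal{Q}_{t+1})\cap\{\lambda_O>0\}\bigr].
\]
Assumption~\ref{assumption:graph_Q} and the argument from (i) make the bracketed set projective, and Proposition~\ref{prop_properties_projective_functions}(ii)--(iii) (projection of a $\Sigma^1_n$ set is $\Sigma^1_n\subseteq\Delta^1_{n+1}$) yield a fixed index $q\geq 1$ such that $D^{t+1}$ is $\Delta^1_q$-measurable. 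The four derived sets are handled in the same fashion.

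The main obstacle is the rigorous bookkeeping of projective indices through integration, composition, intersection and projection, which must be carried out inside a \emph{fixed} $\Delta^1_n$ sigma-algebra and not in $\mathbf{P}$ directly: since $\mathbf{P}$ is not a sigma-algebra, the classical multifunction calculus (closure, closed convex hull, affine hull, relative interior) requires that the ambient class be closed under countable operations, which holds for each $\Delta^1_n$ by Proposition~\ref{prop_properties_projective_functions}(v) but not for the whole projective hierarchy. A careful separation between the stages where a fixed $n$ is needed and those where one is free to absorb into $\mathbf{P}$ is therefore essential.
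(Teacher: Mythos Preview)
Your proposal is correct and follows essentially the same approach as the paper: item (i) is handled identically via the support characterization $(E^{t+1})^{-1}(O)=\{\lambda_O>0\}$ and Proposition~\ref{prj_cvt_pj}, and the derived multifunctions via classical stability results inside the fixed sigma-algebra $\Delta^1_n$. The only minor variations are that in (ii) the paper writes $(D^{t+1}_P)^{-1}(O)=\mathrm{proj}_{\Omega^t}[A\cap\{(\omega^t,p):q_{t+1}(\cdot\mid\omega^t)=p\}]$ and uses a projection argument rather than your composition, and in (iii) the paper simply cites \cite[Proposition~12]{CF24} for the measurability of $D^{t+1}$, whereas you give the explicit projection argument (which is presumably what that citation unpacks to). Your identification of the need to work in a fixed $\Delta^1_n$ rather than $\mathbf{P}$ is exactly the point the paper emphasizes.
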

\begin{proof}
Recall that $0 \leq t \leq T-1 $ is fixed.  Fix also some open set $O \subseteq \mathbb{R}^d$. \\
{\it Proof of (i).}\\
First, we show that 
$
\Omega^t \times \mathfrak{P}(\Omega_{t+1}) \ni (\omega^t, p)  \mapsto p[ \Delta S_{t+1}(\omega^t, \cdot) \in O ]
$
is $\Delta^1_n (\Omega^t \times \mathfrak{P}(\Omega_{t+1}) )$-measurable for some $n\geq 1$.  
Assumption~\ref{assumption_prices} and Proposition \ref{prop_properties_projective_functions} (iv) and (viii) imply that $\Omega^{t+1} \ni \omega^{t+1}  \mapsto \Delta S_{t+1}(\omega^{t+1})$
is $\mathbf{P}(\Omega^{t+1}) $-measurable and thus   $\Delta^1_r (\Omega^{t+1}) $-measurable, for some $r \geq 1.$  
We apply Proposition \ref{prj_cvt_pj}  to the stochastic kernel $q$ defined by $q(d\omega_{t+1}|(p,\o^t))=p(d\omega_{t+1})$, which is Borel 
and thus $\Delta^1_1(\mathfrak{P}(\Omega_{t+1})\times \O^t) $-measurable and the function $f$ defined by $f(p,\o^t,\omega_{t+1})=\mathbf{1}_{\{\Delta S_{t+1}(\omega^t, \omega_{t+1}) \in O\}}$ which is $\Delta^1_r (\mathfrak{P}(\Omega_{t+1})\times \O^{t+1}) $-measurable (see Proposition \ref{prop_properties_projective_functions} (viii) again). Thus, 
$$
 \mathfrak{P}(\Omega_{t+1})  \times \Omega^t \ni (p,\omega^t) \mapsto\int_{\Omega_{t+1}} \mathbf{1}_{\{\Delta S_{t+1}(\omega^t, \omega_{t+1}) \in O\}} \, p(d\omega_{t+1})
$$
is $\Delta^1_{r+3} (\mathfrak{P}(\Omega_{t+1})\times \O^{t}) $-measurable. As $(\omega^t,p) \mapsto (p,\omega^t)$ is $\Delta^1_1( \O^t \times\mathfrak{P}(\Omega_{t+1})) $-measurable, we get that $
 \Omega^t \times \mathfrak{P}(\Omega_{t+1}) \ni (\omega^t, p)  \mapsto p[ \Delta S_{t+1}(\omega^t, \cdot) \in O ] 
$ is $\Delta^1_{r+4} ( \O^t \times\mathfrak{P}(\Omega_{t+1})) $-measurable (see Proposition \ref{prop_properties_projective_functions} (vi)). 
It follows that 
\begin{align*}
    A & :=\Big\{ (\omega^t, p) \in \Omega^t \times \mathfrak{P}(\Omega_{t+1}): \ E^{t+1} (\omega^t, p) \cap O \neq \emptyset \Big\} \\
    &= \Big\{ (\omega^t, p) \in  \Omega^t \times \mathfrak{P}(\Omega_{t+1}): \ p \big[\Delta S_{t+1} (\omega^t, .) \in O \big] > 0 \Big\} \in   \Delta^1_{r+4} (\Omega^t \times \mathfrak{P}(\Omega_{t+1})). 
\end{align*}
So, we have proved the $\Delta^1_{r+4} (\Omega^t \times \mathfrak{P}(\Omega_{t+1})) $-measurability of $E^{t+1} $. Applying \cite[Proposition 14.2, Exercise 14.12]{RockafellarWets1998} and \cite[Lemmata 5.2 and 5.7]{Artstein} 
in the measurable space $(\Omega^t \times \mathfrak{P}(\Omega_{t+1}), \Delta^1_n (\Omega^t \times \mathfrak{P}(\Omega_{t+1}))) $ with $n = r+4 $, proves that $\overline{\text{Conv}}(E^{t+1}),$ $\text{Aff}(E^{t+1}) $ and $\operatorname{Ri} (\overline{\operatorname{Conv}}(E^{t+1}) ) $  are $\Delta^1_n (\Omega^t \times \mathfrak{P}(\Omega_{t+1})) $-measurable. So, we also obtain that $E^{t+1}, \, \overline{\operatorname{Conv}}(E^{t+1}), \operatorname{Aff}(E^{t+1})$ and $\operatorname{Ri} (\overline{\operatorname{Conv}}(E^{t+1}) ) $  are $ \mathbf{P}(\Omega^t \times \mathfrak{P}(\Omega_{t+1}))$-measurable. \\
{\it Proof of (ii).}\\
Let     $P:= q_1 \otimes \dots \otimes q_T \in \mathcal{Q}^T.$ Recalling Remark~\ref{EP_included_D}, then for all $t\in \{1,\dots,T-1\}$ and $\omega^t\in \Omega^t$, $D^{t+1}_P(\omega^t)=E^{t+1}(\omega^t,q_{t+1}(\cdot\mid \omega^t))$.
We have that
\begin{align*}
    \big\{ \omega^t\in  \Omega^t: \ D^{t+1}_P (\omega^t) \cap O \neq \emptyset \big\} 
    &= \big\{ \omega^t \in  \Omega^t: \ \exists p \in \mathfrak{P}(\Omega_{t+1}), \ q_{t+1}(\cdot| \omega^t) = p, \ E^{t+1} (\omega^t, p) \cap O \neq \emptyset \big\} \\
    &= \text{proj}_{\Omega^t} \Bigl[A \cap \big\{ (\omega^t, p) \in \Omega^t \times \mathfrak{P}(\Omega_{t+1}): \ q_{t+1}(\cdot| \omega^t) = p\big\}\bigr].
\end{align*}
As $q_{t+1} \in SK_{t+1}$ and $p \in \mathfrak{P}(\Omega_{t+1}),$ we have that $\Omega^t \times \mathfrak{P}(\Omega_{t+1})  \ni (\omega^t, p) \mapsto q_{t+1}(\cdot| \omega^t) - p \in \mathfrak{P}(\Omega_{t+1})$ is 
$\mathbf{P}(\Omega^t \times \mathfrak{P}(\Omega_{t+1})) $-measurable, see Proposition \ref{prop_properties_projective_functions} (iv) and (viii), and thus 
$\Delta^1_{n\rq{}} (\Omega^t \times \mathfrak{P}(\Omega_{t+1})) $-measurable for some $n\rq{}\geq 1$. Thus, 
$$A \cap \{ (\omega^t, p) \in \Omega^t \times \mathfrak{P}(\Omega_{t+1}): \ q_{t+1}(\cdot| \omega^t) = p \} \in \Delta^1_{m-1} (\Omega^t \times \mathfrak{P}(\Omega_{t+1})) \subseteq \Sigma^1_{m-1} \bigl(\Omega^t \times \mathfrak{P}(\Omega_{t+1})\bigr), $$ 
where $m=\max(n,n\rq{})+1,$ see Proposition \ref{prop_properties_projective_functions} (i). It follows that 
$$
\big\{ \omega^t\in \Omega^t: \ D^{t+1}_P (\omega^t) \cap O \neq \emptyset \big\} =\text{proj}_{\Omega^t} \Bigl[A \cap \big\{ (\omega^t, p) \in \Omega^t \times \mathfrak{P}(\Omega_{t+1}): \ q_{t+1}(\cdot| \omega^t) = p\big\} \Bigr]\in \Sigma^1_{m-1} (\Omega^t ) \subseteq \Delta^1_{m} (\Omega^t ), 
$$
see Proposition \ref{prop_properties_projective_functions} (ii) and \eqref{eq(v)_pj}. 
So, we have proved the $\Delta^1_{m} (\Omega^t) $-measurability of $D^{t+1}_P $. Similarly, by \cite[Proposition 14.2, Exercise 14.12]{RockafellarWets1998} and \cite[Lemmata 5.2 and 5.7]{Artstein}, but this time in the measurable space $(\Omega^t , \Delta^1_{m} (\Omega^t)) $, we obtain that $\overline{\operatorname{Conv}}(D^{t+1}_P), \operatorname{Aff}(D^{t+1}_P)$ and $\operatorname{Ri} (\overline{\operatorname{Conv}}(D^{t+1}_P) )$ 
are $\Delta^1_{m} (\Omega^t) $-measurable, and thus $\mathbf{P}(\Omega^t) $-measurable.\\
{\it Proof of (iii).}\\
Proposition 12 in \cite{CF24} proves that there exists $q \geq 1 $ such that $D^{t+1} $ is $\Delta^1_{q} (\Omega^t) $-measurable. So again, applying \cite[Proposition 14.2, Exercise 14.12]{RockafellarWets1998} and \cite[Lemmata 5.2 and 5.7]{Artstein} in the measurable space $(\Omega^t , \Delta^1_{q} (\Omega^t)) $, we get that  
$\overline{\operatorname{Conv}}(D^{t+1}), $  $\operatorname{Aff}(D^{t+1}) $ and $\operatorname{Ri} (\overline{\operatorname{Conv}}(D^{t+1}) )$ are $\Delta^1_{q} (\Omega^t) $-measurable, and thus $\mathbf{P}(\Omega^t) $-measurable.

\end{proof}

\subsection{Section of jointly measurable sets}

Let $\Omega$ and $\tilde{\Omega}$ be two Polish spaces and suppose that the set-valued mapping $\mathcal{P} : \Omega \twoheadrightarrow \mathfrak{P}(\tilde{\Omega})$ is nonempty-valued. Recall that $SK$ is the set of stochastic kernels such that $q(\cdot \mid \omega)$ is a probability measure in $\mathfrak{P}(\tilde{\Omega})$ for all $\omega \in \Omega$ and $\Omega \ni \omega \mapsto q(\cdot \mid \omega) \in \mathfrak{P}(\tilde{\Omega})$ is projectively measurable.  Let  $\mathcal{R} \subseteq \mathfrak{P}(\Omega)$, we set

\[
\mathcal{Q} := \bigl\{ R \otimes q  : \, R \in \mathcal{R}, \, q \in SK,  \, q(\cdot \mid \omega) \in \mathcal{P}(\omega) \; \forall \omega \in \Omega \bigr\}.
\]

In the quasi-sure literature, it is necessary to prove that if  $\Xi:\Omega \times \tilde{\Omega}\to \mathbb{R}$ satisfies $\Xi \geq 0$ $ \mathcal{Q}$-q.s., then there exists some $\mathcal{R}$-full measure set, with the right measurability, such that for all $\o$ in this set, $\Xi(\o,\cdot) \geq 0$ $ \mathcal{P}(\o)$-q.s. This is Corollary \ref{corollary_section_B_omega}. It is based on Lemma \ref{lemma_section_B_omega}, which generalizes Lemma A.1 of \cite{Carassus25} to the projective setup. The proofs are very similar. The main difference is in the proof of the measurability of $\lambda$, which is simpler in the projective setup. 

\begin{lemma}[Section of jointly measurable sets] \label{lemma_section_B_omega}
    Assume the (PD) axiom. Assume that $Graph(\mathcal{P})\in \mathbf{P}(\Omega \times \tilde{\Omega})$.
    Fix $\bar{B} \in \mathbf{P}(\Omega \times \tilde{\Omega})$. For $\omega \in \Omega$, we denote by $\bar{B}_{\omega}$ the section of $\bar{B}$ along $\omega$, that is
    
    \[\bar{B}_{\omega} := \{\tilde{\omega} \in \tilde{\Omega} : (\omega, \tilde{\omega}) \in \bar{B}\}.\] 
    
    Then, we have \[B := \bigl\{ \omega \in \Omega : q[\bar{B}_{\omega}] = 1, \, \forall q \in \mathcal{P}(\omega) \bigr\} \in \mathbf{P}(\Omega).\]  
If furthermore $\bar{B}$ is a $\mathcal{Q}$-full measure set, then $B$ is a $\mathcal{R}$-full measure set.
\end{lemma}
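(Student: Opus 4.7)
The plan is to reformulate the complement $\Omega \setminus B$ as the projection on $\Omega$ of a projective set in $\Omega \times \mathfrak{P}(\tilde{\Omega})$, from which $B \in \mathbf{P}(\Omega)$ follows by the closure properties of $\mathbf{P}$ under projection and complement (Proposition~\ref{prop_properties_projective_functions} (ii)). The key auxiliary object is the evaluation map
\[
\lambda : \Omega \times \mathfrak{P}(\tilde{\Omega}) \ni (\omega, q) \mapsto q[\bar{B}_{\omega}] = \int_- \mathbf{1}_{\bar{B}}(\omega, \tilde{\omega}) \, q(d\tilde{\omega}).
\]
Since $\bar{B}$ is projective, $\mathbf{1}_{\bar{B}}$ is a projective function on $\Omega \times \tilde{\Omega}$; composition with the Borel map $(\omega, q, \tilde{\omega}) \mapsto (\omega, \tilde{\omega})$ keeps it projective on $\Omega \times \mathfrak{P}(\tilde{\Omega}) \times \tilde{\Omega}$. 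The kernel $p(d\tilde{\omega} \mid (\omega, q)) := q(d\tilde{\omega})$ is (trivially) Borel in $(\omega, q)$, hence projective, so Proposition~\ref{prj_cvt_pj} delivers projectivity of $\lambda$.

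Using Assumption on $\mathrm{Graph}(\mathcal{P}) \in \mathbf{P}(\Omega \times \mathfrak{P}(\tilde{\Omega}))$ together with the projectivity of $\lambda$, one writes
\[
\Omega \setminus B = \mathrm{proj}_{\Omega} \bigl[ \mathrm{Graph}(\mathcal{P}) \cap \{ \lambda < 1 \} \bigr] \in \mathbf{P}(\Omega),
\]
and $B \in \mathbf{P}(\Omega)$ follows by taking the complement. For the full-measure part, I would argue by contraposition: assume $\Omega \setminus B$ is not $\mathcal{R}$-polar, so that $R[\Omega \setminus B] > 0$ for some $R \in \mathcal{R}$, and show $\bar{B}$ fails to be $\mathcal{Q}$-full-measure. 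Proposition~\ref{prop_conseq_PD_axiom} (ii) applied to the projective set $\mathrm{Graph}(\mathcal{P}) \cap \{ \lambda < 1 \}$, whose $\Omega$-projection is exactly $\Omega \setminus B$, yields a projective selector $\phi : \Omega \setminus B \to \mathfrak{P}(\tilde{\Omega})$ satisfying $\phi(\omega) \in \mathcal{P}(\omega)$ and $\phi(\omega)[\bar{B}_{\omega}] < 1$ for every $\omega \in \Omega \setminus B$. A second application to $\mathrm{Graph}(\mathcal{P})$ gives $\tilde{q} \in SK$ with $\tilde{q}(\cdot \mid \omega) \in \mathcal{P}(\omega)$ on all of $\Omega$. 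Gluing along the projective partition $\{B, \Omega \setminus B\}$, set $q^*(\cdot \mid \omega) := \phi(\omega)$ on $\Omega \setminus B$ and $q^*(\cdot \mid \omega) := \tilde{q}(\cdot \mid \omega)$ on $B$; then $q^* \in SK$ (projectivity is preserved by such a partition) and $q^*(\cdot \mid \omega) \in \mathcal{P}(\omega)$ for all $\omega$, so $R \otimes q^* \in \mathcal{Q}$.

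Fubini (valid in the projective setting under (PD), see Remark~\ref{remark_integrals}) then gives
\[
R \otimes q^* \bigl[ (\Omega \times \tilde{\Omega}) \setminus \bar{B} \bigr] = \int_{\Omega} \bigl( 1 - \lambda(\omega, q^*(\cdot \mid \omega)) \bigr) \, R(d\omega) \geq \int_{\Omega \setminus B} \bigl( 1 - \phi(\omega)[\bar{B}_{\omega}] \bigr) \, R(d\omega) > 0,
\]
the final inequality holding because the integrand is strictly positive on the $R$-non-negligible set $\Omega \setminus B$. This contradicts the $\mathcal{Q}$-full-measure property of $\bar{B}$. The main obstacle I expect is the projective measurability of $\lambda$, which requires a careful identification of the stochastic kernel and projective integrand to invoke Proposition~\ref{prj_cvt_pj}; a secondary technical point is to verify that the piecewise selection $q^*$ produced by gluing $\phi$ and $\tilde{q}$ on the projective partition $\{B, \Omega \setminus B\}$ is indeed a projectively measurable stochastic kernel on all of $\Omega$.
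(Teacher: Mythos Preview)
Your proof is correct and follows essentially the same route as the paper: establish projectivity of $\lambda(\omega,q)=q[\bar B_\omega]$ via Proposition~\ref{prj_cvt_pj} with the Borel kernel $p(d\tilde\omega\mid(\omega,q))=q(d\tilde\omega)$, write $\Omega\setminus B=\mathrm{proj}_\Omega\bigl[\mathrm{Graph}(\mathcal P)\cap\{\lambda<1\}\bigr]$, and for the full-measure part select projectively on $\mathrm{Graph}(\mathcal P)\cap\{\lambda<1\}$ and on $\mathrm{Graph}(\mathcal P)$, glue along $\{B,\Omega\setminus B\}$, and conclude by Fubini. The only cosmetic difference is that the paper phrases the first part through the infimum function $\Lambda(\omega):=\inf_{q\in\mathcal P(\omega)}\lambda(\omega,q)$ and its sublevel sets $E_c$, which amounts to the same projection argument you give directly.
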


\begin{proof}
Remark that $B = \{ \Lambda \geq 1 \}$, where

\[
    \Lambda(\omega) := \inf_{q \in \mathcal{P}(\omega)} q[\bar{B}_\omega].
\]

First, we prove that $\Lambda$ is $\mathbf{P}(\Omega)$-measurable. 
For that, we define,
\[
\Omega \times \mathcal{P}(\tilde{\Omega}) \ni      (\omega, q) \mapsto \lambda(\omega, q) := q[\bar{B}_\omega]=\int_{\tilde{\Omega}} \mathbf{1}_{\bar{B}_\omega}(\tilde{\omega}) q(d\tilde{\omega}).
\]
We have that the function $(\omega, q, \tilde{\omega}) \mapsto \mathbf{1}_{\bar{B}_\omega}(\tilde{\omega}) = \mathbf{1}_{\bar{B}}(\omega, \tilde{\omega}) $ is 
$\mathbf{P}(\Omega \times \mathfrak{P}(\tilde{\Omega}) \times \tilde{\Omega}) $-measurable using that $\bar{B} \in \mathbf{P}(\Omega \times \tilde{\Omega}) $ and 
Proposition \ref{prop_properties_projective_functions} (viii).
Let $p(\cdot|\cdot) : \mathcal{B}(\tilde{\Omega}) \times (\Omega \times \mathcal{P}(\tilde{\Omega})) $ be defined by  
$p(A | (\omega, q)) = q[A] $ for all $A \in \mathcal{B}(\tilde{\Omega}) $ and $(\omega, q) \in \Omega \times \mathcal{P}(\tilde{\Omega}) $.  
As $p(\cdot | (\omega, q)) = q[\cdot]$ is a probability measure on $\tilde{\Omega} $ and $(\omega, q) \mapsto p(\cdot | (\omega, q)) = q[\cdot] $ is Borel measurable 
and thus projectively measurable,  we obtain that $q \in SK $.
Recalling that $((\omega, q), \tilde{\omega}) \mapsto \mathbf{1}_{\bar{B}_\omega}(\tilde{\omega})$ is projectively measurable,  we conclude by Proposition \ref{prj_cvt_pj} that the function $(\omega, q) \mapsto \lambda(\omega, q)$ is $\mathbf{P}(\Omega \times \mathcal{P}(\tilde{\Omega}))$-measurable. For any $c \in \mathbb{R}$, we define:

\[E_c := \{ (\omega, q) \in \Omega \times \mathfrak{P}(\tilde{\Omega}) : \lambda(\omega, q) < c \} \cap \operatorname{Graph} \mathcal{P}.\]
Then, $E_c \in \mathbf{P}(\Omega \times \mathcal{P}(\tilde{\Omega}))$, see Proposition~\ref{prop_properties_projective_functions} (ii). 
Moreover, by definition of $\Lambda$ and $E_c$, we obtain that  $\{\Lambda< c \} = \operatorname{proj}_{\Omega} E_c $. Now, Proposition~\ref{prop_properties_projective_functions} (ii) again shows that $\operatorname{proj}_{\Omega} E_c \in \mathbf{P}(\Omega)$, and we conclude that $\Lambda$ is $\mathbf{P}(\Omega)$-measurable. So, $B = \{ \Lambda \geq 1 \} \in \mathbf{P}(\Omega)$.

Assume now that $\bar{B}$ is a $\mathcal{Q}$-full measure set. We prove that $B$ is a $\mathcal{R}$-full measure set.
Assume by contradiction that there exists $\tilde{R} \in \mathcal{R}$ such that $\tilde{R}[\Omega \setminus B] > 0$. 

Since $E_1 \in \mathbf{P}(\Omega \times \mathcal{P}(\tilde{\Omega})) $, we can perform measurable selection on $E_1 $ using Proposition~\ref{prop_conseq_PD_axiom}.  
So, there exists a projectively measurable stochastic kernel $\hat{q} : \operatorname{proj}_\Omega E_1 \to \mathcal{P}(\tilde{\Omega}) $, such that  
$(\omega, \hat{q}(\cdot| \omega)) \in E_1 $ for all $\omega \in \operatorname{proj}_\Omega E_1 =\{ \Lambda <1 \} =\Omega \setminus B $. 
Since $\operatorname{proj}_\Omega \operatorname{Graph}(\mathcal{P}) = \Omega $ and $\operatorname{Graph}(\mathcal{P}) \in \mathbf{P}(\Omega \times \tilde{\Omega}) $, we can also perform measurable selection on $\operatorname{Graph}(\mathcal{P}) $ using again Proposition~\ref{prop_conseq_PD_axiom},  
proving the existence of a projectively measurable stochastic kernel $\bar{q} $ such that for all $\omega \in \Omega $, $\bar{q}(\cdot \mid \omega) \in \mathcal{P}(\omega) $. We set:

\[
    \tilde{q}(\cdot| \omega) := \hat{q}(\cdot| \omega) \mathbf{1}_{\Omega \setminus B} + \bar{q}(\cdot| \omega) \mathbf{1}_B.
\]

We have that $\tilde{q} \in \mathcal{SK}$. Indeed, since $\bar{q}(\cdot|\omega )$ and $\hat{q}(\cdot|\omega ) $ are both probability measures on $\tilde{\Omega}$, $\tilde{q}(\cdot|\omega) $ is also a probability measure. Moreover,  as  $\omega \mapsto \hat{q}(\cdot|\omega) $ and $\omega \mapsto \bar{q}(\cdot| \omega) $ are projectively measurable, and $B, \Omega \setminus B \in \mathbf{P}(\Omega) $, we have that $\omega \mapsto \tilde{q}(\cdot| \omega) $ is projectively measurable, see Proposition~\ref{prop_properties_projective_functions} (ii) and (v). Moreover, as for $\omega \in \Omega \setminus B$, $(\omega, \hat{q}(\cdot| \omega)) \in E_1 \subseteq \operatorname{Graph}(\mathcal{P})$, we conclude that $\tilde{q}(\cdot| \omega) \in \mathcal{P}(\omega)$ for all $\omega \in \Omega$ and that $\tilde{R} \otimes \tilde{q} \in \mathcal{Q}$. Now, we have

\begin{align*}
    \tilde{R} \otimes \tilde{q}[\bar{B}] 
    &= \int_B \int_{\tilde{\Omega}} \mathbf{1}_{\bar{B}}(\omega, \tilde{\omega}) \bar{q}(d\tilde{\omega}|\omega) \tilde{R}(d\omega) 
    + \int_{\Omega \setminus B} \hat{q}(\bar{B}_\omega| \omega) \tilde{R}(d\omega) \\
    &\leq \tilde{R}[B] + \int_{\Omega \setminus B} \lambda(\omega, \hat{q}(\cdot| \omega)) \tilde{R}(d\omega) \\
    &< \tilde{R}[B] + \tilde{R}[\Omega \setminus B] = 1.
\end{align*}

as for all $\omega \in \Omega \setminus B$, $(\omega, \hat{q}(\cdot| \omega)) \in E_1 \subseteq \{ \lambda < 1 \}$ and $\tilde{R}[\Omega \setminus B] > 0$. This contradicts the fact that $\bar{B}$ is of $\mathcal{Q}$-full measure, and we conclude that $B$ is a $\mathcal{R}$-full measure set. 

\end{proof}

\begin{corollary}[From global to local positivity]\label{corollary_section_B_omega}
    Assume the (PD) axiom. Assume that $Graph(\mathcal{P})\in \mathbf{P}(\Omega \times \tilde{\Omega})$. Let $\Xi:\Omega \times \tilde{\Omega}\to \mathbb{R}$ be a projective function. Then, there is an equivalence between:
    \begin{enumerate}
        \item[(i)]  $\Xi\geq0 \; \mathcal{Q}\mbox{-q.s.}$ 
        \item[(ii)]  There exists a projective set of $\mathcal{R}$-full-measure  $\bar{\Omega}\subseteq{\Omega}$, such that for all $\omega\in\bar{\Omega}$, $\Xi(\omega,\cdot) \geq 0 \; \mathcal{P}(\omega)\mbox{-q.s.}$
    \end{enumerate}
\end{corollary}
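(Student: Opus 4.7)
My plan is to obtain both implications from Lemma~\ref{lemma_section_B_omega} applied to the natural choice $\bar{B} := \{\Xi \geq 0\} \subseteq \Omega \times \tilde{\Omega}$. Since $\Xi$ is projective, $\bar{B} = \Xi^{-1}([0,+\infty))$ lies in $\mathbf{P}(\Omega \times \tilde{\Omega})$, so the lemma applies, and the set $B := \{\omega \in \Omega : q[\bar{B}_\omega] = 1 \text{ for all } q \in \mathcal{P}(\omega)\}$ it produces is exactly the pointwise translation we need, because its defining condition on $\omega$ is precisely $q[\Xi(\omega, \cdot) \geq 0] = 1$ for every $q \in \mathcal{P}(\omega)$.

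For (i) $\Rightarrow$ (ii) I would first verify that (i) makes $\bar{B}$ a $\mathcal{Q}$-full measure set. This is immediate: under (PD), Proposition~\ref{prop_conseq_PD_axiom}(i) yields $\{\Xi < 0\} \in \mathbf{P}(\Omega \times \tilde{\Omega}) \subseteq \mathcal{B}_c(\Omega \times \tilde{\Omega})$, and (i) forces $Q[\{\Xi < 0\}] = 0$ for every $Q \in \mathcal{Q}$, so $\{\Xi < 0\}$ is its own $\mathcal{Q}$-polar witness. Lemma~\ref{lemma_section_B_omega} then produces $B \in \mathbf{P}(\Omega)$ of $\mathcal{R}$-full measure satisfying the property above. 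Since $\Xi(\omega, \cdot)$ is projective by Proposition~\ref{prop_properties_projective_functions}(vii), the set $\{\Xi(\omega, \cdot) < 0\}$ is universally measurable in $\tilde{\Omega}$, and for $\omega \in B$ we have $q[\{\Xi(\omega, \cdot) < 0\}] = 0$ for every $q \in \mathcal{P}(\omega)$, so setting $\bar{\Omega} := B$ yields (ii).

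For (ii) $\Rightarrow$ (i) I would just compute. Fix $Q := R \otimes q \in \mathcal{Q}$ with $R \in \mathcal{R}$ and $q \in SK$ satisfying $q(\cdot \mid \omega) \in \mathcal{P}(\omega)$ for every $\omega \in \Omega$. By Remark~\ref{remark_integrals}, Fubini applies to the universally measurable indicator $\mathbf{1}_{\{\Xi < 0\}}$, giving
\[
Q[\Xi < 0] = \int_{\Omega} q\bigl[\{\Xi(\omega, \cdot) < 0\} \,\big|\, \omega\bigr] \, R(d\omega).
\]
For $\omega \in \bar{\Omega}$ the integrand vanishes by (ii) (using again that $\{\Xi(\omega,\cdot)<0\}$ is universally measurable), and $R[\Omega \setminus \bar{\Omega}] = 0$ by the $\mathcal{R}$-full measure assumption, so $Q[\Xi < 0] = 0$ for every $Q \in \mathcal{Q}$. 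Taking $N := \{\Xi < 0\} \in \mathcal{B}_c(\Omega \times \tilde{\Omega})$ as the envelope establishes (i).

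I expect no serious obstacle because all the nontrivial work has been absorbed into Lemma~\ref{lemma_section_B_omega}, whose proof requires the projective measurability of the inf-value function $\Lambda$ and a projective measurable selection on $E_1$ and on $\mathrm{Graph}(\mathcal{P})$. The only real care in the present proof lies in correctly translating between the quasi-sure notion (which demands the existence of a $\mathcal{B}_c$-envelope) and pointwise measure-zero statements; this is legitimate here precisely because (PD) makes every projective set universally measurable, so both $\{\Xi<0\}$ and its sections $\{\Xi(\omega,\cdot)<0\}$ can serve as their own envelopes.
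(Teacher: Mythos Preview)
Your proposal is correct and follows exactly the paper's approach: apply Lemma~\ref{lemma_section_B_omega} to $\bar{B}=\{\Xi\geq 0\}$ and take $\bar{\Omega}=B$ for (i)$\Rightarrow$(ii), and use Fubini for (ii)$\Rightarrow$(i). You have simply been more explicit than the paper about the measurability bookkeeping (universal measurability of $\{\Xi<0\}$ and its sections under (PD)), which is fine.
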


\begin{proof}
    To show that (i) implies (ii), we apply  Lemma~\ref{lemma_section_B_omega} to $\bar{B} = \{\Xi \geq 0\}$ and choose $\bar{\Omega}=B$. The reverse implication is obtained by Fubini\rq{}s theorem.
\end{proof}

\textbf{Data availability statement.}\\
 No datasets were generated or analysed during the current study.

\bibliography{references.bib}

\end{document}